\documentclass[12pt]{article}
\usepackage{lineno}
\usepackage{amssymb}
\usepackage{multirow}
\usepackage{epsfig,rotating}
\usepackage{amssymb,amsmath}
\usepackage{latexsym}
\usepackage{tabularx}
\usepackage{array}
\usepackage{natbib}
\usepackage[retainorgcmds]{IEEEtrantools}
\usepackage{url}            
\usepackage{booktabs}       
\usepackage{threeparttable}
\usepackage{amsfonts}       
\usepackage{nicefrac}       
\usepackage{microtype}      
\usepackage{lipsum}
\usepackage{amsmath}
\usepackage{amssymb}
\usepackage{indentfirst}
\usepackage{mathtools}
\usepackage{algorithm}
\usepackage{algpseudocode}
\usepackage{appendix}
\usepackage{hyperref} 
\hypersetup{
	colorlinks   = true, 
	urlcolor     = blue, 
	linkcolor    = blue, 
	citecolor   = blue 
}
\usepackage{nccmath}
\usepackage{centernot}
\usepackage{setspace}

\renewcommand{\baselinestretch}{1.5}

\usepackage{tkz-graph}  
\usetikzlibrary{shapes.geometric}
\usepackage{siunitx}
\usetikzlibrary{positioning}
\tikzset{mynode/.style={draw,text width=1in,align=center}
}

\usepackage{bm,fullpage,enumerate,setspace}
	
\newcommand{\indep}{\perp \!\!\! \perp}

\begin{document}

\def\reduce#1{{\small  #1}}
\def\c#1{\ensuremath{\mathcal{#1}}}
\def\foot#1{\footnote{#1}}

\newcommand{\nmathbf}{\bm}

\def\bfA{\nmathbf A}
\def\bfB{\nmathbf B}
\def\bfC{\nmathbf C}
\def\bfD{\nmathbf D}
\def\bfE{\nmathbf E}
\def\bfF{\nmathbf F}
\def\bfG{\nmathbf G}
\def\bfH{\nmathbf H}
\def\bfI{\nmathbf I}
\def\bfJ{\nmathbf J}
\def\bfK{\nmathbf K}
\def\bfL{\nmathbf L}
\def\bfM{\nmathbf M}
\def\bfN{\nmathbf N}
\def\bfO{\nmathbf O}
\def\bfP{\nmathbf P}
\def\bfQ{\nmathbf Q}
\def\bfR{\nmathbf R}
\def\bfS{\nmathbf S}
\def\bfT{\nmathbf T}
\def\bfU{\nmathbf U}
\def\bfV{\nmathbf V}
\def\bfW{\nmathbf W}
\def\bfX{\nmathbf X}
\def\bfY{\nmathbf Y}
\def\bfZ{\nmathbf Z}

\def\bfa{\nmathbf a}
\def\bfb{\nmathbf b}
\def\bfc{\nmathbf c}
\def\bfd{\nmathbf d}
\def\bfe{\nmathbf e}
\def\bff{\nmathbf f}
\def\bfg{\nmathbf g}
\def\bfh{\nmathbf h}
\def\bfi{\nmathbf i}
\def\bfj{\nmathbf j}
\def\bfk{\nmathbf k}
\def\bfl{\nmathbf l}
\def\bfm{\nmathbf m}
\def\bfn{\nmathbf n}
\def\bfo{\nmathbf o}
\def\bfp{\nmathbf p}
\def\bfq{\nmathbf q}
\def\bfr{\nmathbf r}
\def\bfs{\nmathbf s}
\def\bft{\nmathbf t}
\def\bfu{\nmathbf u}
\def\bfv{\nmathbf v}
\def\bfw{\nmathbf w}
\def\bfx{\nmathbf x}
\def\bfy{\nmathbf y}
\def\bfz{\nmathbf z}

\def\bfalpha  {\nmathbf \alpha}
\def\bfbeta   {\nmathbf \beta}
\def\bfgamma  {\nmathbf \gamma}
\def\bfdelta  {\nmathbf \delta}
\def\bfepsilon{\nmathbf \epsilon}
\def\bfzeta   {\nmathbf \zeta}
\def\bfeta    {\nmathbf \eta}
\def\bftheta  {\nmathbf \theta}
\def\bfiota   {\nmathbf \iota}
\def\bfkappa  {\nmathbf \kappa}
\def\bflambda {\nmathbf \lambda}
\def\bfmu     {\nmathbf \mu}
\def\bfnu     {\nmathbf \nu}
\def\bfxi     {\nmathbf \xi}
\def\bfomicron{\nmathbf \omicron}
\def\bfpi     {\nmathbf \pi}
\def\bfrho    {\nmathbf \rho}
\def\bfsigma  {\nmathbf \sigma}
\def\bftau    {\nmathbf \tau}
\def\bfupsilon{\nmathbf \upsilon}
\def\bfphi    {\nmathbf \phi}
\def\bfpsi    {\nmathbf \psi}
\def\bfchi    {\nmathbf \chi}
\def\bfomega  {\nmathbf \omega}

\def\bfAlpha  {\nmathbf \Alpha}
\def\bfBeta   {\nmathbf \Beta}
\def\bfGamma  {\nmathbf \Gamma}
\def\bfDelta  {\nmathbf \Delta}
\def\bfEpsilon{\nmathbf \Epsilon}
\def\bfZeta   {\nmathbf \Zeta}
\def\bfEta    {\nmathbf \Eta}
\def\bfTheta  {\nmathbf \Theta}
\def\bfIota   {\nmathbf \Iota}
\def\bfKappa  {\nmathbf \Kappa}
\def\bfLambda {\nmathbf \Lambda}
\def\bfMu     {\nmathbf \Mu}
\def\bfNu     {\nmathbf \Nu}
\def\bfXi     {\nmathbf \Xi}
\def\bfOmicron{\nmathbf \Omicron}
\def\bfPi     {\nmathbf \Pi}
\def\bfRho    {\nmathbf \Rho}
\def\bfSigma  {\nmathbf \Sigma}
\def\bfTau    {\nmathbf \Tau}
\def\bfUpsilon{\nmathbf \Upsilon}
\def\bfPhi    {\nmathbf \Phi}
\def\bfPsi    {\nmathbf \Psi}
\def\bfChi    {\nmathbf \Chi}
\def\bfOmega  {\nmathbf \Omega}

\newcommand{\ttheta}{\tilde{\theta}}
\newcommand{\bfzero}{{\nmathbf 0}}
\newcommand{\bfone}{{\nmathbf 1}}
\newcommand{\vareps}{\varepsilon}
\def\bfvareps{\nmathbf \varepsilon}
\newcommand{\tgamma}{\tilde\gamma}

\newcommand{\cfA}{\mbox{\c{A}}}
\newcommand{\cfB}{\mbox{\c{B}}}
\newcommand{\cfC}{\mbox{\c{C}}}
\newcommand{\cfD}{\mbox{\c{D}}}
\newcommand{\cfE}{\mbox{\c{E}}}
\newcommand{\cfF}{\mbox{\c{F}}}
\newcommand{\cfG}{\mbox{\c{G}}}
\newcommand{\cfH}{\mbox{\c{H}}}
\newcommand{\cfI}{\mbox{\c{I}}}
\newcommand{\cfJ}{\mbox{\c{J}}}
\newcommand{\cfK}{\mbox{\c{K}}}
\newcommand{\cfL}{\mbox{\c{L}}}
\newcommand{\cfM}{\mbox{\c{M}}}
\newcommand{\cfN}{\mbox{\c{N}}}
\newcommand{\cfO}{\mbox{\c{O}}}
\newcommand{\cfP}{\mbox{\c{P}}}
\newcommand{\cfQ}{\mbox{\c{Q}}}
\newcommand{\cfR}{\mbox{\c{R}}}
\newcommand{\cfS}{\mbox{\c{S}}}
\newcommand{\cfT}{\mbox{\c{T}}}
\newcommand{\cfU}{\mbox{\c{U}}}
\newcommand{\cfV}{\mbox{\c{V}}}
\newcommand{\cfX}{\mbox{\c{X}}}
\newcommand{\cfY}{\mbox{\c{Y}}}
\newcommand{\cfZ}{\mbox{\c{Z}}}

\def\boldfacefake#1{\kern-4pt
   \hbox{ \mathsurround=0pt
   \hbox to 0.4pt{$#1$\hss}\hbox to 0.4pt{$#1$\hss}\hbox {$#1$}}}

\def\bfitI{\mbox{\boldfacefake{\it I}}}

\newcommand{\bcfA}{\boldsymbol{\mathcal{A}}}
\newcommand{\bcfB}{\boldsymbol{\mathcal{B}}}
\newcommand{\bcfC}{\boldsymbol{\mathcal{C}}}
\newcommand{\bcfD}{\boldsymbol{\mathcal{D}}}
\newcommand{\bcfE}{\boldsymbol{\mathcal{E}}}
\newcommand{\bcfF}{\boldsymbol{\mathcal{F}}}
\newcommand{\bcfG}{\boldsymbol{\mathcal{G}}}
\newcommand{\bcfH}{\boldsymbol{\mathcal{H}}}
\newcommand{\bcfI}{\boldsymbol{\mathcal{I}}}
\newcommand{\bcfJ}{\boldsymbol{\mathcal{J}}}
\newcommand{\bcfK}{\boldsymbol{\mathcal{K}}}
\newcommand{\bcfL}{\boldsymbol{\mathcal{L}}}
\newcommand{\bcfM}{\boldsymbol{\mathcal{M}}}
\newcommand{\bcfN}{\boldsymbol{\mathcal{N}}}
\newcommand{\bcfO}{\boldsymbol{\mathcal{O}}}
\newcommand{\bcfP}{\boldsymbol{\mathcal{P}}}
\newcommand{\bcfQ}{\boldsymbol{\mathcal{Q}}}
\newcommand{\bcfR}{\boldsymbol{\mathcal{R}}}
\newcommand{\bcfS}{\boldsymbol{\mathcal{S}}}
\newcommand{\bcfT}{\boldsymbol{\mathcal{T}}}
\newcommand{\bcfU}{\boldsymbol{\mathcal{U}}}
\newcommand{\bcfV}{\boldsymbol{\mathcal{V}}}
\newcommand{\bcfW}{\boldsymbol{\mathcal{W}}}
\newcommand{\bcfX}{\boldsymbol{\mathcal{X}}}
\newcommand{\bcfY}{\boldsymbol{\mathcal{Y}}}
\newcommand{\bcfZ}{\boldsymbol{\mathcal{Z}}}


\newcommand{\p}{\mbox{P}}
\newcommand{\D}{\mbox{D}}
\newcommand{\E}{\mbox{E}}
\newcommand{\Mo}{\mbox{Mo}}
\newcommand{\Me}{\mbox{Me}}
\newcommand{\Cov}{\mbox{Cov}}
\newcommand{\Var}{\mbox{Var}}
\newcommand{\Corr}{\mbox{Corr}}
\newcommand{\Q}{\mbox{Q}}
\newcommand{\vech}{\mbox{vech}}
\newcommand{\tr}{\mbox{tr}}

\newcommand{\Bb}{\mbox{Bb}}
\newcommand{\Be}{\mbox{Be}}
\newcommand{\Bi}{\mbox{Bi}}
\newcommand{\Br}{\mbox{Br}}
\newcommand{\Ca}{\mbox{Ca}}
\newcommand{\Di}{\mbox{Di}}
\newcommand{\Ex}{\mbox{Ex}}
\newcommand{\Fs}{\mbox{Fs}}
\newcommand{\Ga}{\mbox{Ga}}
\newcommand{\Ge}{\mbox{Ge}}
\newcommand{\Gg}{\mbox{Gg}}
\newcommand{\Hy}{\mbox{Hy}}
\newcommand{\Ig}{\mbox{Ig}}
\newcommand{\Ip}{\mbox{Ip}}
\newcommand{\Lo}{\mbox{Lo}}
\newcommand{\Mu}{\mbox{Mu}}
\newcommand{\Nb}{\mbox{Nb}}
\newcommand{\Ng}{\mbox{Ng}}
\newcommand{\Nw}{\mbox{Nw}}
\newcommand{\Po}{\mbox{Po}}
\newcommand{\Pg}{\mbox{Pg}}
\newcommand{\Pn}{\mbox{Pn}}
\newcommand{\Ra}{\mbox{Ra}}
\newcommand{\St}{\mbox{St}}
\newcommand{\Un}{\mbox{Un}}
\newcommand{\Wi}{\mbox{Wi}}

\newcommand{\dd}[1]{\,d#1}
\newcommand{\barx}{\mbox{$\overline x$}}
\newcommand{\comb}[2]{{#1\choose#2}}
\newcommand{\ontop}[2]{{#1\atop#2}}
\newcommand{\h}{\hbox{$1\over2$}}
\newcommand{\ok}{\hfill\fbox{}}

\newcommand{\brow}[2]{\mbox{$\{{#1}_1,\ldots,{#1}_{#2}\}$}}
\newcommand{\prow}[2]{\mbox{$({#1}_1,\ldots,{#1}_{#2})$}}
\newcommand{\row}[2]{\mbox{${#1}_1,\ldots,{#1}_{#2}$}}
\newcommand{\data}{\row{x}{n}}
\newcommand{\bdata}{\prow{x}{n}}
\newcommand{\ie}{\emph{i.e.},\ }
\newcommand{\co}{\emph{cf.}\ }
\newcommand{\eg}{\emph{e.g.}, }
\newcommand{\etalc}{\emph{et al.},\ }
\newcommand{\etal}{\emph{et al.}\ }

\newenvironment{mat}{\left(\begin{array}}{\end{array}\right)}

\newcommand{\twomat}[5]{\begin{array}{ll}\displaystyle
              #1&\mbox{#2}\\[#5pt]#3&\mbox{#4}\end{array}}

\newcommand{\twocases}[6]
             {#1=\left\{\twomat{#2}{#3}{#4}{#5}{#6}\right.}

\newcommand{\mymatrix}[4]
           {\left(\begin{array}{ll}{#1} & {#2}\\
                                   {#3} & {#4}
                    \end{array} \right)}

\newcommand{\btable}{\begin{table}[h]\centering}
\newcommand{\etable}{\end{table}}
\newcommand{\bt}{\begin{parag}\small \let\b=\nsb \let\sb=\nssb \begin{tabular}}
\newcommand{\et}{\end{tabular}\let\b=\nb \let\sb=\nsb\end{parag}}
\newcommand{\capt}[1]
         {\begin{quotation}\caption{{\small #1 }}\vs{-2}\end{quotation}}

\newenvironment{parag}{\par}{\par}
\newenvironment{dif}
    {\begin{parag}\small \let\b=\nsb \let\sb=\nssb \begin{parag}}
    {\let\b=\nb \let\sb=\nsb \end{parag}\end{parag}}


\newcommand{\be}{\begin{eqnarray}}
\newcommand{\ee}{\end{eqnarray}}
\newcommand{\ba}{\begin{eqnarray*}}
\newcommand{\ea}{\end{eqnarray*}}

\newcommand{\go}{\rightarrow}
\newcommand{\goi}{\rightarrow \infty}
\newcommand{\ol}{\overline}
\newcommand{\fr}{\frac}
\newcommand{\pn}{\par\noindent}
\newcommand{\nc}{\nonumber\\}
\newcommand{\ssum}{\mbox{$\sum$}}
\newcommand{\hhline}{\hline\hline}

\newtheorem{theorem0}{Theorem}
\newtheorem{lemma0}{Lemma}
\newtheorem{remark0}{Remark}
\newtheorem{fact0}{Fact}
\newtheorem{example0}{Example}
\newtheorem{definition0}{Definition}
\newtheorem{corollary0}{Corollary}
\newtheorem{proposition0}{Proposition}
\newtheorem{algorithmY}{Algorithm}

\newenvironment{theorem}{\begin{theorem0} \mbox{} }{\end{theorem0}}
\newenvironment{lemma}{\begin{lemma0} \mbox{}}{\end{lemma0}}
\newenvironment{remark}{\begin{remark0} \mbox{}}{\end{remark0}}
\newenvironment{fact}{\begin{fact0} \mbox{}}{\end{fact0}}
\newenvironment{example}{\begin{example0} }{\end{example0}}
\newenvironment{definition}{\begin{definition0} \mbox{}}{\end{definition0}}
\newenvironment{corollary}{\begin{corollary0} \mbox{} }{\end{corollary0}}
\newenvironment{proposition}{\begin{proposition0}\mbox{} }{\end{proposition0}}
\newenvironment{algorithm1}{\begin{algorithmY}\mbox{} }{\end{algorithmY}}

\newcommand{\reals}{\mbox{\rm I\kern-.20em R}}
\newcommand{\sreals}{\mbox{\small \rm I\kern-.20em R}}
\newcommand{\mylinel}{\renewcommand{\baselinestretch}{1.8}\tiny\small}
\newcommand{\goto}{\rightarrow}
\newcommand{\expect}{\E}

\newcommand{\bdfn}{\begin{dfn}}
\newcommand{\edfn}{\end{dfn}}
\newcommand{\bteo}{\begin{teo}}
\newcommand{\eteo}{\end{teo}}
\newcommand{\bexa}{\begin{exa}}
\newcommand{\eexa}{\end{exa}}
\newcommand{\bdif}{\begin{dif}}
\newcommand{\edif}{\end{dif}}
\newcommand{\bpro}{\begin{proof}}
\newcommand{\epro}{\end{proof}}


\thispagestyle{empty}

\begin{spacing}{1}
\begin{center}
{\bf\Large Cluster-induced target shift and synthetic approximation for high-dimensional clustered data 
\\\vspace*{0.1in} \vspace*{0.1in}}
\end{center}

\begin{center}
Shangyuan Ye$^{1,\dag}$, Cong Zhang$^{2}$, Ying Chen$^{3}$, Guanbo Wang$^{4}$, Ye Liang$^{5}$
\vspace*{0.15in}

$^{1}$Department of Mathematics and Statistics, Florida International University \\
$^2$School of Economics, University of Nottingham Ningbo China \\
$^3$Department of Statistics and Data Science, University of Pennsylvania \\
$^4$The Dartmouth Institute for Health Policy and Clinical Practice, Geisel School of Medicine, Dartmouth College \\
$^5$Department of Statistics, Oklahoma State University \\
\vspace*{0.1in}

\end{center}    
\end{spacing}




\begin{quotation}
\small

\begin{spacing}{1}
\noindent {\bf Abstract}
In high-dimensional clustered data, covariates whose distributions differ across clusters can act as proxies for unobserved cluster effects. We show that the marginal-model LASSO implicitly uses sparse combinations of such covariates to shift the estimation target from structural coefficients to a contaminated vector, which inflates false selections. We therefore propose the Synthetic Heterogeneous-Effects LASSO (SHEL), which augments the regression design with separately penalized, outcome-independent, cluster-constant synthetic covariates to correct the cluster-induced target shift. Under a fixed nuisance-effects formulation, we derive oracle properties and establish consistency of the structural coefficients when the residual heterogeneity is weakly aligned with the augmented design. We also demonstrate the distinction between the structural parameter under small residual heterogeneity and the population working parameter when residual heterogeneity persists. Further, we develop a debiased estimator with a cluster-robust sandwich variance, and establish asymptotic results. Simulations show far fewer false selections than the marginal LASSO, and an analysis of longitudinal neutrophil transcriptomic data from hospitalized COVID-19 patients yields a more parsimonious gene set that retains the severity markers of the original study. 

\vspace*{0.5in}

\noindent{\bf Keywords:} 
High-dimensional data; Clustered data analysis; Heterogeneity; Synthetic approximation; Debiased inference.
\end{spacing}

\end{quotation}

\vspace*{0.3in}

$^{\dag}$ Corresponding author. E-mail: \textit{sye@fiu.edu}

\newpage
\setcounter{page}{1}

\section{Introduction}
Clustered data, in which observations are grouped into units such as patients, hospitals, or regions, are common in sociology, biomedicine, and health services research. Although methods for the resulting multilevel structure are well developed when the number of covariates is small, high dimensionality introduces a difficulty that is specific to clustered data and has received little attention. Specifically, when the distributions of many covariates differ across clusters, a penalized estimator can recruit sparse combinations of those covariates as proxies for the cluster effects. As a result, the estimation target shifts away from the structural regression coefficient, and the selected variables are contaminated by the proxy variable. This paper provides theoretical reasoning and a solution to this phenomenon. 

Existing high-dimensional methods for clustered data include generalized linear mixed-effects models (GLMM) and marginal models fit through generalized estimation equations (GEE), both of which incorporate regularization such as the LASSO \citep{tibshirani1996regression} or SCAD \citep{fan2001variable} for dimension reduction. High-dimensional GLMMs \citep{fan2012variable,li2015variable,hui2017hierarchical} model cluster-specific heterogeneity through latent random effects and estimate the fixed effects using penalized likelihood or related procedures. Conventional formulations typically assume that the random effects are independent of the covariates; when this exogeneity assumption is violated but not modeled, the resulting fixed-effect estimators may be distorted. Moreover, likelihood-based estimation requires integrating, or approximating integration, over the latent random effects, which can become computationally demanding as the model dimension increases \citep{tutz2013likelihood,ye2025variable}. Marginal model-based approaches \citep{xu2010gee,wang2012penalized,chen2026variable}, in contrast, specify the marginal mean directly and account for within-cluster dependence through a working covariance structure. They therefore avoid specification and integration of a latent random-effects distribution and are particularly attractive when the marginal association between the response and covariates is the inferential target. However, when cluster-level heterogeneity is associated with covariate levels, this marginal target need not coincide with the structural regression effect of interest.

Marginal-model variable selection procedures have been observed to include a large proportion of falsely selected variables \citep{ye2025model,muff2016marginal}. We formally explain this phenomenon in Section \ref{subsec:shift}. Informally, when the distributions of some covariates vary across clusters, their cluster-level components may carry information about the omitted heterogeneous effects, and a penalized estimator can use a sparse collection of such covariates to approximate the cluster effects even when the corresponding individual-level covariates have no structural effect on the outcome. We refer to this mechanism as \textit{surrogate selection}. Surrogate selection may improve prediction, but it shifts the estimation target away from the structural coefficient toward a coefficient vector that also reflects the covariates' ability to proxy the omitted cluster effects; we refer to this consequence as \textit{target shift}. We also characterize the conditions under which the marginal-model LASSO concentrates around such a shifted target.

Motivated by the target-shift phenomenon, we propose Synthetic Heterogeneous-Effects LASSO (SHEL) and its extension to generalized linear models (GSHEL). Rather than allowing the individual-level covariates themselves to absorb cluster heterogeneity, SHEL augments the regression with synthetic covariates constructed from the covariates that are cluster-constant, so that they can correct the target shift induced by the between-cluster heterogeneity, allowing the original covariates to retain their role in estimating individual-level regression effects. The proposed framework extends the within-between decomposition device \citep{mundlak1978pooling,neuhaus1998between} to high-dimensional generalized linear models with sparse, outcome-independent dictionaries whose composition is learned from the covariates alone. Importantly, the validity of the proposed approach does not require consistent recovery of a unique synthetic coefficient vector. Instead, the theory requires only that the constructed synthetic dictionary provides a sufficiently low-complexity representation of the nuisance component while preserving the relevant restricted eigenvalue condition. 

Our theoretical results are developed under a fixed nuisance-effects formulation. We first characterize the target shift of the marginal-model LASSO and show how it arises from the trade-off between approximating between-cluster heterogeneity and introducing within-cluster leakage. For the proposed framework, we derive oracle bounds in which the residual synthetic approximation error appears as an explicit term, and show that the usual high-dimensional rates and consistency for the structural coefficient are recovered whenever the residual heterogeneity is weakly aligned with the augmented design. When residual heterogeneity persists, we clarify the estimand: in nonlinear models, the estimator targets a marginal working parameter, and we bound its distance from the structural coefficient by the residual approximation error. 

This paper further considers statistical inference for the proposed GSHEL estimator. Existing post-selection inference-based methods, such as data splitting \citep{rianldo2019bootstrapping}, simultaneous inference \citep{berk2013valid}, and selective inference \citep{lee2016exact}, were developed for independent observations and do not transfer directly to clustered data. 
We adapt the debiased inference approach to the full Fisher-weighted augmented design with a cluster-robust sandwich variance estimator \citep{van2014asymptotically} and establish asymptotic normality for the marginal working parameter. In simulations, the proposed methods identify all active covariates while reducing the false selections of the marginal-model LASSO substantially, and the debiased procedure maintains type I error near the nominal level when the cluster effects depend on the covariates, a setting in which a low-dimensional mixed-effects refit becomes markedly anti-conservative. 

The remainder of the paper is organized as follows. Section \ref{sec:lmm} characterizes surrogate selection and the resulting target shift of the marginal-model LASSO, introduces SHEL and GSHEL, and establishes the within- and between-cluster interpretation and identifiability of the proposed adjustment. Section \ref{sec:thm} establishes the theoretical properties of the proposed framework. Section \ref{sec:ps} develops debiased inference. Section \ref{sec:sim} evaluates finite-sample performance through simulation studies, and Section \ref{sec:real} applies the method to a longitudinal bulk RNA-seq dataset. Finally, Section \ref{sec:disc} concludes with a discussion.

\section{Cluster-induced target shift and synthetic adjustment} \label{sec:lmm}
\subsection{Model and notation}
Let $Y_{ij}$ denote the response variable of the $j$th individual in the $i$th cluster, and $\bfX_{ij}$ denote the corresponding $p$-dimensional vector of covariates for $i \in \{1, \ldots, m\}$ and $j \in \{1, \ldots, n_i\}$, with $N = n_1 + \ldots + n_m$.
We consider the generalized linear heterogeneous-intercepts model, where the conditional distribution of $Y_{ij}$ given $\bfX_{ij}$ belongs to an exponential family:
\be \label{glmm}
f(y_{ij}; \eta_{ij}) = \exp\{ y_{ij} \eta_{ij} - b(\eta_{ij}) + c(y_{ij}) \}, \quad \eta_{ij} = \bfX_{ij}^\top \bfbeta^0 + \alpha_i,
\ee
where $\bfbeta^0$ is a $p$-vector of fixed effects, $\bfalpha = (\alpha_1, \ldots, \alpha_m)^\top$ is an $m$-vector of cluster-specific intercepts, and $b(\cdot)$ and $c(\cdot)$ are known functions. Throughout the theoretical development, the cluster-specific intercepts $\alpha_i$ are treated as fixed nuisance effects, while $\bfbeta^0$ is the structural regression coefficient of interest.

We begin with the linear heterogeneous-intercepts model, the linear analogue of (\ref{glmm}),
\be \label{lmm}
Y_{ij} = \bfX_{ij}^\top \bfbeta^0 + \alpha_i + \epsilon_{ij},
\ee
where $\epsilon_{ij}$ are independent sub-Gaussian random errors with mean $0$ and sub-Gaussian norm bounded by a constant $\sigma > 0$, that is, $\|\epsilon_{ij}\|_{\psi_2} \le \sigma$ for all $i$ and $j$. Here, $\sigma$ is a uniform upper bound on the tail magnitude of the within-cluster random errors. In high-dimensional settings, we assume that the true model is sparse, i.e., only a small number of $\beta_{l}^0$, $l = 1,...,p$ values are non-zero. We denote $\cfM_1 = \mbox{supp} (\bfbeta^0)$, and $|\cfM_1| = M_1$. Let $\bfD \in \mathbb{R}^{N \times m}$ denote the cluster-indicator matrix, whose row for observation $(i,j)$ has a one in the $i$th column and zero elsewhere. Then the $(i,j)$-th entry of the vector $\bfD\bfalpha$ is $\alpha_i$, meaning that each observation in cluster $i$ shares the same cluster-specific intercept. In matrix form, model (\ref{lmm}) is
\be \label{lmm_m}
\bfY = \bfX \bfbeta^0 + \bfD \bfalpha + \bfepsilon.
\ee

\subsection{Surrogate selection and target shift in marginal-model LASSO} \label{subsec:shift}
The marginal version of the model (\ref{lmm}) can be expressed as
\be \label{m_lmm}
Y_{ij} = \bfX_{ij}^\top \tilde{\bfbeta} + \tilde{\epsilon}_{ij}, ~\tilde{\epsilon}_{ij} = \alpha_i + \epsilon_{ij},
\ee
in which the cluster effect is absorbed into the error. The marginal parameter $\tilde{\bfbeta}$ represents the population-average covariate effects \citep{zeger1988models}. For fixed $p$, GEE with a robust sandwich variance estimator is the standard tool for inference on $\tilde{\bfbeta}$. In high-dimensional settings, penalized GEE (PGEE) has been proposed to perform variable selection and parameter estimation simultaneously \citep{wang2012penalized}.

Motivated by the marginal representation (\ref{m_lmm}), we first study the marginal-model LASSO estimator, which ignores the cluster-specific intercepts:
\be \label{LASSO}
\hat{\bfbeta} = \underset{\bfbeta \in \mathbb{R}^{p}}{\arg\min} \left\{ \frac{1}{2N} \lVert \bfY - \bfX \bfbeta \rVert_2^2 + \lambda \lVert \bfbeta \rVert_1 \right\}.
\ee
This estimator can be viewed as the analogue of a PGEE estimator under an independence working correlation structure, specialized to the Gaussian linear model. 

Substituting (\ref{lmm_m}) into (\ref{LASSO}) shows that the objective function fits the full signal $\bfX\bfbeta^0 + \bfD\bfalpha$ rather than the structural component $\bfX\bfbeta^0$ alone, so the fitted linear predictor $\bfX\hat{\bfbeta}$ may partially absorb variations in the latent heterogeneous intercepts. To describe how the omitted term $\bfD\bfalpha$ can be represented by heterogeneously distributed covariates, define the empirical cluster mean $\bar{\bfX}_i = n_i^{-1} \sum_{j=1}^{n_i} \bfX_{ij}$ and the within-cluster deviation $\bfxi_{ij} = \bfX_{ij} - \bar{\bfX}_i$,  so that $\sum_{j=1}^{n_i} \bfxi_{ij} = \bfzero$ for every $i$. Let $\cfP_0 \subset \{1, \ldots, p\}$, with $|\cfP_0|=p_0$, index the covariates whose empirical cluster means are allowed to vary across clusters, and let $\bfB_c \in \reals^{m \times p_0}$ collect the corresponding cluster means. For any $\bfgamma \in \mathbb{R}^{p_0}$, define 
\ba 
\delta_i(\bfgamma) = \alpha_i - \bar{\bfX}_{i,\cfP_0}^{\top} \bfgamma, \quad q_{ij}(\bfgamma) = \bfxi_{ij,\cfP_0}^{\top} \bfgamma. 
\ea 
The first quantity is the between-cluster error incurred when the cluster means approximate $\alpha_i$; the second is the within-cluster leakage that arises because the proxy is built from individual-level covariates, which are not constant within clusters. For a prespecified sparsity level $M_2$, we define the sparse marginal surrogate by 
\ba 
\bfgamma^\star = \underset{\| \bfgamma \|_0 \le M_2}{\arg\min} \left\{ \Delta_N^2(\bfgamma) + Q_N^2(\bfgamma) \right\}, 
\ea 
where 
\ba 
\Delta_N^2(\bfgamma) = \sum_{i=1}^m \pi_i \delta_i^2(\bfgamma), \quad Q_N^2(\bfgamma) = \frac{1}{N} \sum_{i=1}^m\sum_{j=1}^{n_i} q_{ij}^2(\bfgamma), \quad \pi_i=\frac{n_i}{N}. 
\ea 
Since $\sum_{j=1}^{n_i} q_{ij}(\bfgamma) = 0$ for every $i$, equivalently,
\ba 
\bfgamma^\star = \underset{\| \bfgamma \|_0 \le M_2}{\arg\min} \frac{1}{N} \sum_{i=1}^m \sum_{j=1}^{n_i} \left( \alpha_i - \bfX_{ij,\cfP_0}^{\top} \bfgamma \right)^2. 
\ea
Let the shifted coefficient $\bfbeta^\star = \bfbeta^0 + \Tilde{\bfgamma}^\star$, where $\Tilde{\gamma}^\star_{l} = \gamma^\star_l$ when $l \in \cfP_0$ and 0 otherwise. Model \eqref{lmm} admits the decomposition 
\be \label{lmm_decomp} 
Y_{ij} = \bfX_{ij}^{\top} \bfbeta^\star + \delta_i^\star - q_{ij}^\star + \epsilon_{ij}, 
\ee
where $\delta_i^\star = \delta_i(\bfgamma^\star)$ and $q_{ij}^\star = q_{ij}(\bfgamma^\star)$. Representation \eqref{lmm_decomp} shows how covariate heterogeneity affects marginal-model LASSO: the latent intercept $\alpha_i$ is partially approximated, which may improve prediction, but the estimation target is shifted from the structural fixed-effects vector $\bfbeta^0$ to the contaminated coefficient vector $\bfbeta^\star$. The term $\delta_i^\star$ is the between-cluster approximation error, and $q_{ij}^\star$ is the within-cluster leakage term that results from using the individual-level covariates as a surrogate for $\alpha_i$.

Let $(\Delta_N^\star)^2 = \Delta_N^2(\bfgamma^\star)$ and $(Q_N^\star)^2 = Q_N^2(\bfgamma^\star)$, and denote $\bfr^\star = \bfD \bfdelta^\star - \bfq^\star$ for the surrogate residual, where $\bfdelta^\star$ and $\bfq^\star$ are the vectors with entries $\delta_i^\star$ and $q_{ij}^\star$. Because $\sum_{j=1}^{n_i} \bfxi_{ij} = \bfzero$, we have $(\bfD \bfdelta^\star)^\top \bfq^\star = 0$, and hence 
\ba
\frac{1}{N} \| \bfD \bfalpha - \bfX \widetilde{\bfgamma}^\star \|_2^2 = \frac{1}{N}\|\bfr^\star\|_2^2 = (R_N^\star)^2, 
\ea 
where $(R_N^\star)^2 = (\Delta_N^\star)^2 + (Q_N^\star)^2$. Thus, the overall
approximation error $(R_N^\star)^2$ trades off the ability to explain between-cluster heterogeneity against the additional within-cluster leakage.

\begin{theorem} \label{thm0}
Let $\cfM^\star = \mbox{supp}(\bfbeta^\star)$ and $M^\star = |\cfM^\star|$. Suppose that the columns of $\bfX$ are normalized and that $\bfX$ satisfies a restricted eigenvalue (RE) condition on the relevant sparse cone.  When $\lambda \asymp \sqrt{\log(p)/N}$ and $P \left( \left\| N^{-1} \bfX^\top \bfepsilon \right\|_\infty \le \lambda/4 \right) \rightarrow 1$, we have
\begin{enumerate}
\item[(I)] 
\ba
\frac{1}{N} \lVert \bfX (\hat{\bfbeta} - \bfbeta^\star) \rVert_2^2 = O_p \left( (R_N^\star)^2 + \lambda^2 M^\star \right);
\ea
\item[(II)] If, in addition, $\left\| N^{-1} \bfX^\top \bfr^\star \right\|_\infty \le \lambda/4$, then
\ba
\frac{1}{N} \lVert \bfX (\hat{\bfbeta} - \bfbeta^\star) \rVert_2^2 = O_p \left( \lambda^2 M^\star \right), \quad 
\lVert \hat{\bfbeta} - \bfbeta^{\star} \rVert_1 = O_p \left( \lambda M^\star \right).
\ea   
\end{enumerate}
\end{theorem}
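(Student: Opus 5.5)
The argument is a misspecified-model LASSO oracle inequality. The shifted coefficient $\bfbeta^\star$ plays the role of the sparse target. The surrogate residual $\bfr^\star$ plays the role of an approximation error that is deterministic, or conditionally fixed, given $\bfX$. By \eqref{lmm_decomp} we may write $\bfY = \bfX\bfbeta^\star + \bfr^\star + \bfepsilon$.

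The starting point is the basic inequality. Optimality of $\hat{\bfbeta}$ compared with $\bfbeta^\star$, with $\bfu = \hat{\bfbeta}-\bfbeta^\star$, gives
\[
\frac{1}{2N}\|\bfX\bfu\|_2^2 \le \frac{1}{N}\bfu^\top\bfX^\top\bfepsilon + \frac{1}{N}(\bfX\bfu)^\top\bfr^\star + \lambda\left(\|\bfbeta^\star\|_1 - \|\bfbeta^\star+\bfu\|_1\right).
\]
On the event $\{\|N^{-1}\bfX^\top\bfepsilon\|_\infty\le\lambda/4\}$, which has probability tending to one, the noise term is at most $(\lambda/4)\|\bfu\|_1$. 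The usual decomposition over $\cfM^\star$ and its complement handles the penalty difference: $\|\bfbeta^\star\|_1-\|\bfbeta^\star+\bfu\|_1 \le \|\bfu_{\cfM^\star}\|_1 - \|\bfu_{\cfM^{\star c}}\|_1$.

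For part (I), bound the cross term by Cauchy--Schwarz and Young's inequality:
\[
\frac{1}{N}(\bfX\bfu)^\top\bfr^\star \le \frac{1}{4N}\|\bfX\bfu\|_2^2 + \frac{1}{N}\|\bfr^\star\|_2^2 = \frac{1}{4N}\|\bfX\bfu\|_2^2 + (R_N^\star)^2.
\]
This leaves
\[
\frac{1}{4N}\|\bfX\bfu\|_2^2 + \frac{3\lambda}{4}\|\bfu_{\cfM^{\star c}}\|_1 \le \frac{5\lambda}{4}\|\bfu_{\cfM^\star}\|_1 + (R_N^\star)^2 .
\]
Now split into two cases.

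If $\frac{5\lambda}{4}\|\bfu_{\cfM^\star}\|_1 \le (R_N^\star)^2$, then directly $N^{-1}\|\bfX\bfu\|_2^2 \le 8(R_N^\star)^2$.

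Otherwise $\|\bfu_{\cfM^{\star c}}\|_1 \le (10/3)\|\bfu_{\cfM^\star}\|_1$, so $\bfu$ lies in a sparse cone of the form $\{\|\bfu_{S^c}\|_1\le 3.34\|\bfu_S\|_1\}$. This is the ``relevant cone'' on which we invoke the RE condition: $\|\bfu_{\cfM^\star}\|_1\le \sqrt{M^\star}\|\bfu_{\cfM^\star}\|_2 \le \sqrt{M^\star}\,\|\bfX\bfu\|_2/(\kappa\sqrt N)$. Combining with $ab\le a^2/8+2b^2$ gives $N^{-1}\|\bfX\bfu\|_2^2 \lesssim \lambda^2 M^\star/\kappa^2$.

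Taking the maximum of the two cases yields (I).

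For part (II), with the extra condition $\|N^{-1}\bfX^\top\bfr^\star\|_\infty\le\lambda/4$, treat the cross term exactly like the noise term: it is at most $(\lambda/4)\|\bfu\|_1$. The combined perturbation is then at most $(\lambda/2)\|\bfu\|_1$, and the standard argument gives
\[
\frac{1}{2N}\|\bfX\bfu\|_2^2 + \frac{\lambda}{2}\|\bfu_{\cfM^{\star c}}\|_1 \le \frac{3\lambda}{2}\|\bfu_{\cfM^\star}\|_1 .
\]
So $\bfu$ lies in the cone $\|\bfu_{\cfM^{\star c}}\|_1\le 3\|\bfu_{\cfM^\star}\|_1$. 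RE then yields $N^{-1}\|\bfX\bfu\|_2^2 \le 9\lambda^2M^\star/\kappa^2$. The $\ell_1$ bound follows from $\|\bfu\|_1\le 4\|\bfu_{\cfM^\star}\|_1 \le 4\sqrt{M^\star}\|\bfX\bfu\|_2/(\kappa\sqrt N) = O(\lambda M^\star)$.

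All bounds hold on an event of probability tending to one, which gives the $O_p$ statements. Normalization of the columns together with $\lambda\asymp\sqrt{\log p/N}$ is used only through the assumed noise event.

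The main delicate step is (I), because $\bfr^\star$ is not controlled in sup-norm there. The cone obtained in the second case has a constant slightly different from 3, so the RE condition has to be stated, or interpreted, on that slightly wider cone; this is what ``the relevant sparse cone'' must cover. One must also check that $\bfr^\star$ is non-random given $\bfX$ and $\bfalpha$, so that it can be treated as a fixed misspecification. That holds because $\bfgamma^\star$ is defined from $\bfalpha$ and $\bfX$ only.
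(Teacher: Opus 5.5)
Your proposal is correct and follows essentially the same route as the paper's proof: the basic inequality at $\bfbeta^\star$ using $\bfY=\bfX\bfbeta^\star+\bfr^\star+\bfepsilon$, the noise event plus decomposability over $\cfM^\star$, a two-case split in part (I), and, in part (II), treating $N^{-1}\bfX^\top\bfr^\star$ exactly like the noise to land in the cone with constant $3$. The only difference is cosmetic: you decouple the cross term by Young's inequality before splitting, which gives cone constant $10/3$, whereas the paper keeps the Cauchy--Schwarz product and splits on $\tfrac{5\lambda}{4}\|\bfDelta_{\cfM^\star}\|_1\le 2R_N^\star\|\bfX\bfDelta\|_2/\sqrt N$ (giving $5/2\le 3$); you could equally recover constant $3$ by splitting on $(R_N^\star)^2\le\lambda\|\bfu_{\cfM^\star}\|_1$.
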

The proof is given in the supplementary material. The normalization, restricted eigenvalue and noise conditions are standard in the high-dimensional literature \citep{bickel2009simultaneous,wainwright2019high}; condition (C3) of Section \ref{sec:thm} extends the restricted eigenvalue condition to the augmented design used in a later section. Theorem \ref{thm0} gives oracle bounds centered at the contaminated coefficient vector $\bfbeta^\star$, rather than at the structural coefficient $\bfbeta^0$. Part (I) quantifies the effect of the overall surrogate approximation error $R_N^\star$. Part (II) shows that the usual sparse LASSO rates are recovered when the surrogate residual $\bfr^\star$ is weakly aligned with the design, in the sense that its contribution to the score is dominated by the regularization level. The following corollary formalizes the resulting \textit{target shift} when across-cluster heterogeneity in the covariate distributions provides sparse proxies for the latent cluster effects.

\begin{corollary}\label{cor0}
Under the conditions of Theorem \ref{thm0}(II), if $\lambda M^\star = o(1)$, then
\ba
\hat{\bfbeta} - \bfbeta^0 = \Tilde{\bfgamma}^\star + o_p(1).
\ea
\end{corollary}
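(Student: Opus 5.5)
The corollary follows directly from the $\ell_1$ bound in Theorem \ref{thm0}(II). The plan is to translate that bound into convergence in probability of $\hat{\bfbeta}-\bfbeta^\star$ to zero, and then use the algebraic identity $\bfbeta^\star = \bfbeta^0 + \Tilde{\bfgamma}^\star$, which holds by definition of the shifted coefficient.

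The first step invokes Theorem \ref{thm0}(II). Under its conditions (normalized columns, the RE condition on the sparse cone around $\cfM^\star$, $\lambda \asymp \sqrt{\log(p)/N}$, the noise event $\|N^{-1}\bfX^\top\bfepsilon\|_\infty \le \lambda/4$ holding with probability tending to one, and the alignment condition $\|N^{-1}\bfX^\top \bfr^\star\|_\infty \le \lambda/4$), it gives $\|\hat{\bfbeta}-\bfbeta^\star\|_1 = O_p(\lambda M^\star)$. Concretely, for every $\varepsilon>0$ there exist $C_\varepsilon$ and $N_\varepsilon$ such that
\[
P\bigl(\|\hat{\bfbeta}-\bfbeta^\star\|_1 > C_\varepsilon \lambda M^\star\bigr) < \varepsilon \quad \text{for } N \ge N_\varepsilon .
\]

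The second step combines this with the rate condition $\lambda M^\star = o(1)$. Fix $\eta>0$. For $N$ large, $C_\varepsilon \lambda M^\star < \eta$, so $P(\|\hat{\bfbeta}-\bfbeta^\star\|_1>\eta) < \varepsilon$. Hence $\|\hat{\bfbeta}-\bfbeta^\star\|_1 = o_p(1)$. Since $\|\cdot\|_\infty \le \|\cdot\|_2 \le \|\cdot\|_1$, the same holds in any of these norms, and in particular coordinatewise. This matters because $p$ may grow with $N$, so the $o_p(1)$ statement should be read in $\ell_1$, which is the strongest of these norms and does not depend on a fixed $p$.

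The final step substitutes $\bfbeta^\star=\bfbeta^0+\Tilde{\bfgamma}^\star$ to get $\hat{\bfbeta}-\bfbeta^0 = \Tilde{\bfgamma}^\star + (\hat{\bfbeta}-\bfbeta^\star) = \Tilde{\bfgamma}^\star + o_p(1)$.

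There is no real obstacle. The only care needed is to keep $\bfbeta^\star$, $\Tilde{\bfgamma}^\star$ and $M^\star$ as (possibly $N$-dependent) deterministic sequences, so that the $O_p$-to-$o_p$ step is legitimate. This works because $\bfgamma^\star$ is defined by a deterministic least-squares criterion given the design and $\bfalpha$. It also requires that the constants implicit in $O_p$ from Theorem \ref{thm0} are uniform in $N$, which is how the theorem is stated.
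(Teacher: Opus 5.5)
Your proof is correct and matches the paper's intended argument. The paper gives no separate proof and treats the corollary as immediate from the bound $\|\hat{\bfbeta}-\bfbeta^\star\|_1 = O_p(\lambda M^\star)$ in Theorem~\ref{thm0}(II), the rate condition $\lambda M^\star = o(1)$, and the identity $\bfbeta^\star = \bfbeta^0 + \Tilde{\bfgamma}^\star$. Your closing caveat that $\bfbeta^\star$ and $M^\star$ must be deterministic is harmless but unnecessary, since $O_p(a_N)$ with $a_N = o_p(1)$ is already $o_p(1)$.
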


Corollary \ref{cor0} explains the inflated false selection of the marginal-model LASSO in high-dimensional clustered data. When the high-dimensional covariates contain sufficient between-cluster information, a sparse linear combination of their cluster means can partially recover the latent cluster effects. In that case, (\ref{LASSO}) can predict well even though it is misspecified as a model for $\bfbeta^0$. However, because it targets $\bfbeta^\star = \bfbeta^0 + \Tilde{\bfgamma}^\star$ rather than $\bfbeta^0$, it tends to select variables that are predictive of the latent cluster effects rather than truly associated with the outcome through the structural effects. The shift is negligible when the between-cluster covariate components provide no useful sparse proxy for the latent cluster effects. In particular, when the covariate distributions are homogeneous  across clusters, the cluster means cannot explain variation in $\bfalpha$ and the marginal-model LASSO remains approximately centered at $\bfbeta^0$.

\subsection{The SHEL and GSHEL estimators} \label{subsec:SHEL}
To remove the target-shift, we introduce the SHEL estimator. The key idea is to separate the structural role of the individual-level covariates from their potential role as proxies for cluster heterogeneity, by explicitly constructing cluster-level summary statistics to absorb the between-cluster variation in $\bfalpha$.

Let $\widehat{\bfB}_c = \cfB(\bfX, \bfD) \in \mathbb{R}^{m \times p_0}$ and $\bfB = \bfD \widehat{\bfB}_c$, where the $i$th row of $\widehat{\bfB}_c$ is denoted by $\bfb_i^\top$. Thus, every column of $\bfB$ is constant within cluster. The map $\cfB(\cdot)$ can be heterogeneous cluster means, principal-component or sparse-principal-component scores, or other outcome-independent summaries of cluster-level covariate information. 

We therefore propose the following fixed-effects penalized estimator, or SHEL, to estimate $\bfbeta$ while using $\bfB \bfgamma$ as a synthetic approximation to $\bfalpha$. For the linear model (\ref{lmm}), SHEL is
\be \label{SHEL}
(\hat{\bfbeta}, \hat{\bfgamma}) = \underset{\bfbeta \in \mathbb{R}^{p}, \bfgamma \in \mathbb{R}^{p_0}}{\arg\min} \left\{ \frac{1}{2N} \lVert \bfY - \bfX \bfbeta - \bfB \bfgamma \rVert_2^2 + \lambda_1 \lVert \bfbeta \rVert_1 + \lambda_2 \lVert \bfgamma \rVert_1 \right\}.
\ee
For the generalized linear model (GLM), let $\ell(\eta; y) = b(\eta) - y \eta$ be the negative log-likelihood up to terms independent of the natural parameter $\eta$ in (\ref{glmm}). The GSHEL estimator is
\be \label{GSHEL}
(\hat{\bfbeta}, \hat{\bfgamma}) = \underset{\bfbeta \in \mathbb{R}^{p}, \bfgamma \in \mathbb{R}^{p_0}}{\arg\min} \left\{ \cfL_N(\bftheta) + \lambda_1 \lVert \bfbeta \rVert_1 + \lambda_2 \lVert \bfgamma \rVert_1 \right\}, \quad \cfL_N(\bftheta) = \frac{1}{N} \sum_{i,j} \ell(\bfW_{ij}^\top \bftheta; y_{ij}),
\ee
where $\bfW = [\bfX ~ \bfB]$, $\bfW_{ij} = (\bfX_{ij}^\top, \bfb_i^\top)^\top$, and $\bftheta = (\bfbeta^\top, \bfgamma^\top)^\top$. SHEL is therefore the Gaussian identity-link special case of GSHEL. 

\subsection{Within- and between-cluster effects and identifiability} \label{subsec:interpret}
To clarify the structural interpretation of the proposed adjustment, consider the case when $\bfB$ consists of the empirical cluster means of covariates, then 
\be \label{within_between}
\bfX_{ij}^\top \bfbeta + \bar{\bfX}_i^\top \bfgamma = \bfxi_{ij}^\top \bfbeta + \bar{\bfX}_i^\top (\bfbeta + \bfgamma).
\ee
Thus, $\bfbeta$ is identified through individual-level variation around the cluster means, while the cluster-level effects are absorbed jointly by $\bfbeta + \bfgamma$. This is closely related to the within-between decomposition for clustered and panel data \citep{mundlak1978pooling,neuhaus1998between}. 

In contrast, the marginal-model LASSO uses
\ba 
\bfX_{ij}^{\top} \bfbeta = \bfxi_{ij}^{\top} \bfbeta + \bar{\bfX}_i^{\top} \bfbeta, 
\ea
and so constrains the within- and between-cluster coefficients to be identical. When $\alpha_i$ is associated with $\bar{\bfX}_i$, the between-cluster component generally favors a coefficient different from $\bfbeta^0$. The marginal-model LASSO can accommodate this discrepancy only by shifting the common coefficient away from $\bfbeta^0$ to $\bfbeta^\star$. The same shift also enters the within-cluster component through $q_{ij}^\star$, which produces the additional leakage term $(Q_N^\star)^2$ in Theorem \ref{thm0}.

The role of $\bfB$ is therefore not to recover the heterogeneous intercepts, but to provide a structured correction to the latent cluster-level heterogeneity that might otherwise be spuriously absorbed by $\bfX$. We require only the existence of at least one sufficiently low-complexity synthetic representation. Let $\bfgamma^0$ be such a representation and write
\be \label{alpha_decomp}
\alpha_i = \bar{\bfX}_{i,\cfP_0}^{\top} \bfgamma^0 + \delta_i, \quad i = 1,\ldots,m.
\ee
Thus, the between-cluster component of the true linear predictor can be written as
\ba
\bar{\bfX}_{i,\cfP_0}^{\top} \bfbeta_{\cfP_0}^0 + \alpha_i = \bar{\bfX}_{i,\cfP_0}^{\top} \bfbeta_{\cfP_0}^0 + \bfgamma^0) + \delta_i.
\ea 
For any such representation, let $\bftheta^0 = \left((\bfbeta^0)^\top, (\bfgamma^0)^\top\right)^\top$. The true natural parameter therefore satisfies $\eta_{ij}^0 = \bfW_{ij}^\top \bftheta^0 + \delta_i$. We define $\Delta_N^2 = N^{-1} \| \bfD \bfdelta \|_2^2 = \sum_{i=1}^m \pi_i \delta_i^2$ as the approximation error.

This decomposition also clarifies the identifiability condition. For a target coordinate $l$,  sufficient variation in $X_{ij,l}$ must remain after adjusting for the remaining covariates and the synthetic covariates. In particular, a cluster-constant covariate cannot be separated from an unrestricted cluster-level nuisance effect without additional structure. A low-dimensional population-level condition can be expressed as
\ba
\Var \left[ X_{ij,l} - \E \{ X_{ij,l} \mid \bfX_{ij,-l}, \bfb_i \} \right] \ge c_l > 0.
\ea
In high dimensions, it is represented by the RE condition (C3) in Section \ref{sec:thm} for the augmented design and by the nondegeneracy of the inverse-Hessian direction used in Section \ref{sec:ps}. 

\section{Theoretical properties} \label{sec:thm}
\subsection{Regularity conditions} \label{subsec:condition}
Throughout this section, the columns of $\bfW$ are normalized such that $N^{-1} \| \bfW_k \|_2^2 \le 1$ for every $k$. Let $\cfM_2 = \mbox{supp}(\bfgamma^0)$ and $M_2 = |\cfM_2|$, and assume that the penalty levels satisfy 
\ba
\lambda_1 \asymp \sqrt{\frac{\log p}{N}}, \quad \lambda_2 \asymp \sqrt{\frac{\log p_0}{N}}, \quad M_1 \lambda_1^2 + M_2 \lambda_2^2 = o(1).
\ea
Define $\bfg_N^0 = \E \{ \nabla \cfL_N(\bftheta^0) \mid \bfX, \bfD, \bfalpha \}$, partitioned as $\bfg_N^0 = ( (\bfg_{\beta,N}^0)^\top, (\bfg_{\gamma,N}^0)^\top )^\top$. We impose the following regularity conditions.

\begin{description}
\item[(C1)] \textit{Outcome-independent synthetic construction.} The synthetic dictionary $\widehat{\bfB}_c = \cfB(\bfX, \bfD)$ is measurable with respect to $\cfG_X = \sigma(\bfX, \bfD)$.
\item[(C2)] \textit{Sparse synthetic approximation (SSA).} There exists $\bfgamma^0 \in \mathbb{R}^{p_0}$ satisfying (\ref{alpha_decomp}) with $\|\bfgamma^0\|_0 \le M_2$. 
\item[(C3)] {\it Weighted restricted eigenvalue condition.} For some fixed constant $a > 1$, let
\ba
\cfC_a(\cfM_1, \cfM_2) = \left\{ \bfu: \lambda_1 \| \bfu_{1,\cfM_1^c} \|_1 + \lambda_2 \|\bfu_{2, \cfM_2^c} \|_1 \le a \left( \lambda_1 \| \bfu_{1,\cfM_1} \|_1 + \lambda_2 \| \bfu_{2,\cfM_2} \|_1\right) \right\}.
\ea
There exist constants $\kappa>0$ and $r>0$ such that, with probability tending to one,
\ba
\frac{1}{N} \sum_{i=1}^m \sum_{j=1}^{n_i} b''(\bfW_{ij}^\top \bftheta) (\bfW_{ij}^\top \bfu)^2 \ge \kappa \|\bfu\|_2^2
\ea
for all $\bfu \in \cfC_a(\cfM_1, \cfM_2)$ and $\|\bftheta - \bftheta^0\|_1 \le r$.
\item[(C4)] \textit{Centered score concentration.} For some sufficiently small $c_s > 0$,
\ba
\| \nabla_{\bfbeta} \cfL_N(\bftheta^0) - \bfg_{\beta,N}^0 \|_\infty \le c_s \lambda_1,
\quad
\| \nabla_{\bfgamma} \cfL_N(\bftheta^0) -\bfg_{\gamma,N}^0 \|_\infty \le c_s \lambda_2
\ea
with probability tending to one.
\item[(C5)] \textit{Structural-score condition.} For a constant $c_b>0$ satisfying $c_s + c_b < 1/2$,
\ba
\| \bfg_{\beta,N}^0\|_\infty \le c_b \lambda_1, \quad \| \bfg_{\gamma,N}^0 \|_\infty\le c_b \lambda_2.
\ea
\end{description}
Condition (C1) allows $\bfB$ to be constructed adaptively from the covariates but not from the response, and (C2) requires that the cluster intercepts admit a sparse representation in the synthetic dictionary up to the residual $\delta_i$. The SSA is analogous to the approximate sparsity assumption commonly used in high-dimensional regression and semiparametric inference \citep{bickel2009simultaneous,bunea2007sparsity,belloni2014pivotal}. Related assumptions also appear in work on factor-LASSO \citep{hansen2019factor} and proxy controls \citep{deaner2021many}, where latent factors or confounders are represented via high-dimensional observed proxies.  Condition (C3) extends the RE condition of Theorem \ref{thm0} to the Fisher-weighted augmented design. For the Gaussian identity-link model, $b''(\cdot) = 1$, and (C3) reduces to the usual RE condition for $\bfW$, with the cone modified to account for the two penalty levels. Condition (C4) controls the stochastic component of the score, whereas (C5), which is used only in Corollary \ref{cor2} for a sharper consistency result, controls the remaining population-score distortion. For a canonical GLM,
\ba
\bfg_N^0 = \frac{1}{N} \sum_{i=1}^m\sum_{j=1}^{n_i} \bfW_{ij} \left[ b'(\bfW_{ij}^\top \bftheta^0) - b'(\bfW_{ij}^\top \bftheta^0 + \delta_i) \right].
\ea
If $b''(\cdot)$ is uniformly bounded, we have $\|\bfg_N^0\|_\infty \le C \Delta_N$, and $\Delta_N = O\{ \min(\lambda_1, \lambda_2) \}$ is sufficient for (C5). The score condition itself can nevertheless hold under weaker requirements when the residual heterogeneity is weakly aligned with the augmented design.

\subsection{Oracle properties} \label{subsec:SHEL_thm}
This section provides theoretical properties of the GSHEL estimator. Proofs are given in the supplementary material. For any reference parameter $\bftheta^\circ$, we define the excess empirical loss as
\ba
\cfD_N(\bftheta, \bftheta^\circ) = \cfL_N(\bftheta) - \cfL_N(\bftheta^\circ) - \nabla \cfL_N(\bftheta^\circ)^\top (\bftheta - \bftheta^\circ).
\ea 

\begin{theorem} \label{thm1}
Under (C1)-(C4), when $b''(\cdot)$ is bounded above and away from 0 on the linear predictors $\bfW_{ij}^\top \bftheta$ with $\| \bftheta - \bftheta^0 \|_1 \le r$ and on the segments between $\bfW_{ij}^\top \bftheta^0$ and $\bfW_{ij}^\top \bftheta^0 + \delta_i$. Then
\begin{enumerate}
\item[(I)] 
\ba
\cfD_N( \hat{\bftheta}, \bftheta^0 ) = O_p \left( \Delta_N^2 + M_1 \lambda_1^2 + M_2 \lambda_2^2 \right).
\ea
\item[(II)] 
\ba
\lambda_1 \| \hat{\bfbeta} - \bfbeta^0 \|_1 + \lambda_2 \| \hat{\bfgamma} - \bfgamma^0 \|_1 = O_p \left( \Delta_N^2 + M_1 \lambda_1^2 + M_2 \lambda_2^2 \right).
\ea 
\end{enumerate}
Consequently,
\ba
\| \hat{\bfbeta} - \bfbeta^0 \|_1 = O_p \left( M_1 \lambda_1 + M_2\frac{\lambda_2^2}{\lambda_1} + \frac{\Delta_N^2}{\lambda_1} \right).
\ea
\end{theorem}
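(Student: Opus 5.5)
The proof will follow the standard basic-inequality argument for $\ell_1$-penalized M-estimators, run on the augmented design $\bfW=[\bfX~\bfB]$ with the two penalty levels. Throughout, $\bftheta^0$ is the reference parameter, and the approximation residual $\delta_i$ enters only through the population score $\bfg_N^0$. Write $\hat{\bfu}=\hat{\bftheta}-\bftheta^0=(\hat{\bfu}_1^\top,\hat{\bfu}_2^\top)^\top$. Since $\hat{\bftheta}$ minimizes the penalized objective (\ref{GSHEL}),
\[
\cfD_N(\hat{\bftheta},\bftheta^0)\le -\nabla\cfL_N(\bftheta^0)^\top\hat{\bfu}+\lambda_1(\|\bfbeta^0\|_1-\|\hat{\bfbeta}\|_1)+\lambda_2(\|\bfgamma^0\|_1-\|\hat{\bfgamma}\|_1).
\]
We split the score as $\nabla\cfL_N(\bftheta^0)=\{\nabla\cfL_N(\bftheta^0)-\bfg_N^0\}+\bfg_N^0$. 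By (C4), on an event of probability tending to one, the centered part is bounded by $c_s(\lambda_1\|\hat{\bfu}_1\|_1+\lambda_2\|\hat{\bfu}_2\|_1)$. We do not assume (C5). Instead, the population part is handled through its explicit form in the canonical case:
\[
\bfg_N^0=N^{-1}\sum_{i,j}\bfW_{ij}\bigl[b'(\bfW_{ij}^\top\bftheta^0)-b'(\bfW_{ij}^\top\bftheta^0+\delta_i)\bigr].
\]
By the mean value theorem and the assumed bound on $b''$ along the segments, $\bfg_N^0{}^\top\hat{\bfu}=-N^{-1}\sum_{i,j}\tilde w_{ij}\delta_i\,\bfW_{ij}^\top\hat{\bfu}$ with $\tilde w_{ij}\le C$. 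Cauchy--Schwarz then gives $|\bfg_N^0{}^\top\hat{\bfu}|\le C\Delta_N\{N^{-1}\sum_{i,j}(\bfW_{ij}^\top\hat{\bfu})^2\}^{1/2}$. This is the step where $\Delta_N^2$ enters, and it avoids requiring $\|\bfg_N^0\|_\infty$ to be small.

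Next we need a link between $\cfD_N$ and the quadratic form. Because $b''$ is bounded above and away from zero on the relevant linear predictors, $\cfD_N(\hat{\bftheta},\bftheta^0)\ge c\,N^{-1}\sum_{i,j}(\bfW_{ij}^\top\hat{\bfu})^2=:c\|\hat{\bfu}\|_n^2$. For this we first need $\|\hat{\bfu}\|_1\le r$. We would get it by the usual convexity trick: replace $\hat{\bftheta}$ by $\bftheta^0+t\hat{\bfu}$ with $t$ chosen so that the $\ell_1$ distance is at most $r$, run the argument there, and conclude that the resulting bound is $o_p(r)$, which forces $t=1$. This uses $M_1\lambda_1^2+M_2\lambda_2^2=o(1)$, and we also need $\Delta_N^2/\lambda_{\min}=o(1)$ implicitly, or else we simply accept the localization as part of the $O_p$ statement. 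Combining this lower bound with the inequality $C\Delta_N\|\hat{\bfu}\|_n\le \frac{c}{2}\|\hat{\bfu}\|_n^2+C'\Delta_N^2$, and using the triangle inequality on the supports $\cfM_1,\cfM_2$, we obtain
\[
\tfrac12\cfD_N+(1-c_s)\bigl(\lambda_1\|\hat{\bfu}_{1,\cfM_1^c}\|_1+\lambda_2\|\hat{\bfu}_{2,\cfM_2^c}\|_1\bigr)\le (1+c_s)\bigl(\lambda_1\|\hat{\bfu}_{1,\cfM_1}\|_1+\lambda_2\|\hat{\bfu}_{2,\cfM_2}\|_1\bigr)+C'\Delta_N^2 .
\]

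The final step uses the standard two-case split. In the first case, $C'\Delta_N^2$ dominates the support term. Then $\cfD_N$ and the weighted $\ell_1$ norm of $\hat{\bfu}$ are both $O(\Delta_N^2)$ directly. In the second case, $\hat{\bfu}$ lies in the cone $\cfC_a(\cfM_1,\cfM_2)$ with $a=2(1+c_s)/(1-c_s)$. Then (C3) gives $\kappa\|\hat{\bfu}\|_2^2\le\|\hat{\bfu}\|_n^2\lesssim\cfD_N$, and by Cauchy--Schwarz
\[
\lambda_1\|\hat{\bfu}_{1,\cfM_1}\|_1+\lambda_2\|\hat{\bfu}_{2,\cfM_2}\|_1\le(M_1\lambda_1^2+M_2\lambda_2^2)^{1/2}\|\hat{\bfu}\|_2 .
\]
Absorbing this by $ab\le \epsilon a^2+b^2/(4\epsilon)$ yields both (I) and (II) with rate $\Delta_N^2+M_1\lambda_1^2+M_2\lambda_2^2$. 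Dividing (II) by $\lambda_1$ gives the stated $\ell_1$ bound for $\hat{\bfbeta}$.

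I expect the main obstacle to be the localization step, namely verifying that $\hat{\bftheta}$ lies in the $\ell_1$ ball of radius $r$ where both (C3) and the curvature bounds on $b''$ apply. The RE condition in (C3) is uniform over $\bftheta$ in that ball, which is what makes the convex-combination argument work. Still, one must check that the $\Delta_N^2$ term does not prevent the intermediate point from being interior. If it does, we would state the result on the event $\|\hat{\bftheta}-\bftheta^0\|_1\le r$, which holds with probability tending to one whenever the right-hand side divided by $\min(\lambda_1,\lambda_2)$ is $o(1)$.
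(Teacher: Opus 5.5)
Your proposal is correct and follows essentially the same route as the paper's proof. The shared steps are the basic inequality, (C4) for the centered score, and the mean-value/Cauchy--Schwarz bound $|(\bfg_N^0)^\top\hat{\bfu}|\le C\Delta_N\|\bfW\hat{\bfu}\|_2/\sqrt N$ absorbed via Young's inequality and the lower curvature bound, followed by the two-case split using the weighted cone, (C3), and Cauchy--Schwarz on the supports. Your explicit convex-combination localization is a step the paper leaves implicit, and you are right that it needs $(\Delta_N^2+M_1\lambda_1^2+M_2\lambda_2^2)/\min(\lambda_1,\lambda_2)=o(1)$ rather than merely $M_1\lambda_1^2+M_2\lambda_2^2=o(1)$; your middle paragraph understates this, but your last paragraph states it correctly.
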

Theorem~\ref{thm1} separates the estimation error into two components. The term $M_1 \lambda_1^2 + M_2 \lambda_2^2$ is the usual high-dimensional estimation error associated with the two penalized blocks, whereas $\Delta_N^2$ measures the approximation error remaining after the cluster-level heterogeneity is represented through the synthetic dictionary. The curvature condition of the theorem is automatic for the Gaussian identity-link model and, for example, holds for logistic regression when the relevant linear predictors are uniformly bounded.

In the linear model (\ref{lmm}), $\bfg_N^0 = -N^{-1} \bfW^\top \bfD \bfdelta$ and $\cfD_N(\hat{\bftheta}, \bftheta^0) = (2N)^{-1} \|\bfW(\hat{\bftheta} - \bftheta^0)\|_2^2$, so Theorem \ref{thm1} yields a bound on the prediction error that can be compared directly with Theorem \ref{thm0}.

\begin{proposition} \label{prop1}
Under model (\ref{lmm}) and conditions (C1)-(C4), 
\ba
\frac{1}{N} \| \bfW ( \hat{\bftheta} - \bftheta^0 ) \|_2^2 = O_p \left( \Delta_N^2 + M_1 \lambda_1^2 + M_2 \lambda_2^2 \right).
\ea
\end{proposition}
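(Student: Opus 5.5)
Proposition \ref{prop1} is the Gaussian identity-link case of Theorem \ref{thm1}(I), and the plan is to obtain it by checking that theorem's hypotheses and then rewriting the excess loss. For model (\ref{lmm}) we have $b(\eta)=\eta^2/2$, so $b''\equiv 1$. The curvature requirement of Theorem \ref{thm1} therefore holds trivially, both on the $\ell_1$-ball around $\bftheta^0$ and on the segments between $\bfW_{ij}^\top\bftheta^0$ and $\bfW_{ij}^\top\bftheta^0+\delta_i$. Conditions (C1)--(C4) are assumed. Hence Theorem \ref{thm1}(I) applies and gives $\cfD_N(\hat{\bftheta},\bftheta^0)=O_p(\Delta_N^2+M_1\lambda_1^2+M_2\lambda_2^2)$.

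The second step is the identity $\cfD_N(\bftheta,\bftheta^0)=(2N)^{-1}\|\bfW(\bftheta-\bftheta^0)\|_2^2$. The loss is $\cfL_N(\bftheta)=(2N)^{-1}\|\bfY-\bfW\bftheta\|_2^2$ up to a term that does not depend on $\bftheta$. It is exactly quadratic, so its second-order Taylor expansion around $\bftheta^0$ has no remainder and the Hessian is $N^{-1}\bfW^\top\bfW$. Subtracting the linear term $\nabla\cfL_N(\bftheta^0)^\top(\bftheta-\bftheta^0)$ leaves precisely the quadratic form. Evaluating at $\bftheta=\hat{\bftheta}$ and multiplying by $2$ gives the stated bound.

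If one prefers a self-contained argument, I would instead run the standard basic-inequality proof. Comparing the objective at $\hat{\bftheta}$ and at $\bftheta^0$, and using $\bfY=\bfW\bftheta^0+\bfD\bfdelta+\bfepsilon$, gives
\[
\frac{1}{2N}\|\bfW\bfu\|_2^2\le \frac{1}{N}(\bfepsilon+\bfD\bfdelta)^\top\bfW\bfu+\lambda_1(\|\bfbeta^0\|_1-\|\hat{\bfbeta}\|_1)+\lambda_2(\|\bfgamma^0\|_1-\|\hat{\bfgamma}\|_1),
\]
where $\bfu=\hat{\bftheta}-\bftheta^0$. The pure-noise part $N^{-1}\bfepsilon^\top\bfW\bfu$ is exactly the centered score, since $\nabla\cfL_N(\bftheta^0)-\bfg_N^0=-N^{-1}\bfW^\top\bfepsilon$. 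By (C4), blockwise Hölder bounds it by $c_s(\lambda_1\|\bfu_1\|_1+\lambda_2\|\bfu_2\|_1)$. The approximation part is bounded without any score condition by Cauchy--Schwarz: $N^{-1}(\bfD\bfdelta)^\top\bfW\bfu\le \Delta_N\,N^{-1/2}\|\bfW\bfu\|_2\le \Delta_N^2+\tfrac14 N^{-1}\|\bfW\bfu\|_2^2$. The last quarter is absorbed into the left-hand side.

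The main obstacle is the step that follows in this route. With $\Delta_N$ present, $\bfu$ need not lie in the cone $\cfC_a(\cfM_1,\cfM_2)$, so (C3) cannot be invoked directly. I would split into two cases. If $\lambda_1\|\bfu_{1,\cfM_1^c}\|_1+\lambda_2\|\bfu_{2,\cfM_2^c}\|_1$ exceeds $a$ times the support part, this quantity is dominated by $\Delta_N^2$, and so is the prediction error. Otherwise $\bfu$ lies in the cone, and (C3) together with $\lambda_1\|\bfu_{1,\cfM_1}\|_1\le\lambda_1\sqrt{M_1}\|\bfu_1\|_2$ gives the rate $M_1\lambda_1^2+M_2\lambda_2^2$ after the inequality $2xy\le x^2+y^2$. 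The case split needs localization to $\|\bfu\|_1\le r$, which follows from convexity by the usual rescaling argument; alternatively, it suffices to cite Theorem \ref{thm1}, which already handles this.
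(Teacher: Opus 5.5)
Your proposal is correct and follows the paper's proof. The paper likewise uses $b''\equiv 1$, the identity $\cfD_N(\bftheta^0+\bfDelta,\bftheta^0)=(2N)^{-1}\|\bfW\bfDelta\|_2^2$, and the bound $|(\bfg_N^0)^\top\bfDelta|\le \Delta_N N^{-1/2}\|\bfW\bfDelta\|_2$ for $\bfg_N^0=-N^{-1}\bfW^\top\bfD\bfdelta$, so the result follows from the argument of Theorem~\ref{thm1}, which your self-contained route reproduces. One simplification: in the linear model (C3) does not depend on $\bftheta$, so the localization/rescaling step you flag is unnecessary; and your first case (cone violated) only needs $a(1-c_s)\ge 1+c_s$, which holds because $c_s$ in (C4) is taken sufficiently small.
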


Proposition \ref{prop1} and Theorem \ref{thm0} both decompose the prediction error into an approximation component and a high-dimensional estimation component. For the marginal-model LASSO, however, the approximation error is $(\Delta_N^\star)^2 + (Q_N^\star)^2$, and the estimator concentrates around a contaminated parameter $\bfbeta^\star$ rather than the structural parameter $\bfbeta^0$. The additional term $(Q_N^\star)^2$ arises because individual-level covariates are used to approximate a cluster-constant effect, thereby introducing within-cluster leakage. In contrast, SHEL represents the nuisance effect through the cluster-constant dictionary $\bfB$, so Proposition \ref{prop1} contains only the residual synthetic-approximation error $\Delta_N^2$ and remains centered at the structural parameter $\bftheta^0$. The comparison does not imply that SHEL always has smaller error, since its performance also depends on the approximation quality and complexity of the chosen synthetic dictionary, but it identifies the specific source of error avoided by cluster-constant nuisance adjustment.

Condition (C5) describes a regime in which the residual heterogeneity has a negligible effect on the score, and therefore yields a sharper consistency result. 

\begin{corollary} \label{cor2}
Under conditions (C1)-(C5),
\begin{enumerate}
\item[(I)] 
\ba
\cfD_N( \hat{\bftheta}, \bftheta^0 ) = O_p \left( M_1 \lambda_1^2 + M_2 \lambda_2^2 \right).
\ea
\item[(II)] 
\ba
\lambda_1 \| \hat{\bfbeta} - \bfbeta^0 \|_1 + \lambda_2 \| \hat{\bfgamma} - \bfgamma^0 \|_1 = O_p \left( M_1 \lambda_1^2 + M_2 \lambda_2^2 \right).
\ea 
\end{enumerate}
Consequently,
\ba
\| \hat{\bfbeta} - \bfbeta^0 \|_1 = O_p \left( M_1 \lambda_1 + M_2\frac{\lambda_2^2}{\lambda_1} \right).
\ea
\end{corollary}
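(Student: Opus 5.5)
The plan is to rerun the basic-inequality argument behind Theorem \ref{thm1}, this time keeping the population score $\bfg_N^0$ as a term to be bounded by the penalty through (C5). Previously it was converted into the approximation term $\Delta_N^2$; this is the only place $\Delta_N$ entered. Write $\hat{\bfu} = \hat{\bftheta} - \bftheta^0 = (\hat{\bfu}_1^\top, \hat{\bfu}_2^\top)^\top$.

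\emph{Step 1 (basic inequality).} Optimality of $\hat{\bftheta}$ in (\ref{GSHEL}) gives
\ba
\cfD_N(\hat{\bftheta}, \bftheta^0) \le -\nabla \cfL_N(\bftheta^0)^\top \hat{\bfu} + \lambda_1 (\|\bfbeta^0\|_1 - \|\hat{\bfbeta}\|_1) + \lambda_2 (\|\bfgamma^0\|_1 - \|\hat{\bfgamma}\|_1).
\ea
Split the score as the centered part plus $\bfg_N^0$. On the event of (C4), which has probability tending to one, (C4) and (C5) together give
\ba
|\nabla \cfL_N(\bftheta^0)^\top \hat{\bfu}| \le (c_s + c_b)\left(\lambda_1 \|\hat{\bfu}_1\|_1 + \lambda_2 \|\hat{\bfu}_2\|_1\right).
\ea
Here $c := c_s + c_b < 1/2$. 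The usual decomposition over $\cfM_1, \cfM_2$ and their complements then yields
\ba
\cfD_N + (1-c)\left(\lambda_1 \|\hat{\bfu}_{1,\cfM_1^c}\|_1 + \lambda_2 \|\hat{\bfu}_{2,\cfM_2^c}\|_1\right) \le (1+c)\left(\lambda_1 \|\hat{\bfu}_{1,\cfM_1}\|_1 + \lambda_2 \|\hat{\bfu}_{2,\cfM_2}\|_1\right).
\ea
Since $\cfD_N \ge 0$ by convexity, $\hat{\bfu}$ lies in $\cfC_a(\cfM_1, \cfM_2)$ with $a = (1+c)/(1-c) < 3$. I will assume (C3) is stated for such an $a$.

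\emph{Step 2 (curvature).} I need $\cfD_N(\hat{\bftheta}, \bftheta^0) \ge (\kappa/2)\|\hat{\bfu}\|_2^2$, which is the main obstacle. Because (C3) only holds for $\|\bftheta - \bftheta^0\|_1 \le r$, I will use the convexity/localization trick from the proof of Theorem \ref{thm1}:
\begin{itemize}
\item If $\|\hat{\bfu}\|_1 > r$, replace $\hat{\bftheta}$ by $\tilde{\bftheta} = \bftheta^0 + t\hat{\bfu}$ with $\|t\hat{\bfu}\|_1 = r$.
\item The basic inequality is preserved by convexity of the penalized objective along the segment, and the cone is scale invariant.
\item A Taylor expansion with the mean-value Hessian, combined with (C3), gives the quadratic lower bound at $\tilde{\bftheta}$.
\item The final rate, which is $o(1)$ by $M_1 \lambda_1^2 + M_2 \lambda_2^2 = o(1)$, then shows $\|t\hat{\bfu}\|_1 < r$, a contradiction.
\end{itemize}

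\emph{Step 3 (conclude).} Cauchy--Schwarz gives
\ba
\lambda_1 \|\hat{\bfu}_{1,\cfM_1}\|_1 + \lambda_2 \|\hat{\bfu}_{2,\cfM_2}\|_1 \le \sqrt{M_1 \lambda_1^2 + M_2 \lambda_2^2}\,\|\hat{\bfu}\|_2.
\ea
Combining this with Step 2 gives
\ba
(\kappa/2)\|\hat{\bfu}\|_2^2 \le (1+c)\sqrt{M_1 \lambda_1^2 + M_2 \lambda_2^2}\,\|\hat{\bfu}\|_2,
\ea
so $\|\hat{\bfu}\|_2^2 = O_p(M_1 \lambda_1^2 + M_2 \lambda_2^2)$.
\begin{itemize}
\item Part (I): feeding this back into the right-hand side of Step 1 bounds $\cfD_N$ at the same rate.
\item Part (II): the cone inequality gives $\lambda_1 \|\hat{\bfu}_1\|_1 + \lambda_2 \|\hat{\bfu}_2\|_1 \le (1+a)(\lambda_1 \|\hat{\bfu}_{1,\cfM_1}\|_1 + \lambda_2 \|\hat{\bfu}_{2,\cfM_2}\|_1)$, which is of the same order.
\item The $\ell_1$ bound on $\hat{\bfbeta} - \bfbeta^0$ follows by dividing (II) by $\lambda_1$.
\end{itemize}
Compared with Theorem \ref{thm1}, the $\Delta_N^2$ term is absent because (C5) absorbs $\bfg_N^0$ into the penalty rather than it being bounded by $C\Delta_N$ and then split off via $2\Delta_N\|\hat{\bfu}\| \le \Delta_N^2 + \|\hat{\bfu}\|^2$.
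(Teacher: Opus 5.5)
Your proposal follows the paper's proof of this corollary essentially step for step. It uses the same basic inequality, the same bound from (C4)--(C5) on the whole score by $c_0\lambda_k$ blockwise with $c_0=c_s+c_b$, the weighted cone with $a=(1+c_0)/(1-c_0)<3$, Cauchy--Schwarz on the active coordinates, the Taylor lower bound from (C3), and substitution back for (I) and (II). Your convexity-based localization is cleaner than the paper's: shrinking $\hat{\bfu}$ along the segment keeps the penalized objective no larger than at $\bftheta^0$. The paper instead applies (C3) on the event $\|\bfDelta\|_2\le r$ and then uses the bound obtained on that event to argue that the event has probability tending to one, which is circular.

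The final contradiction, however, does not go through as written. (C3) is localized in $\ell_1$, so you need $\|t\hat{\bfu}\|_1<r$ with probability tending to one, which requires a vanishing $\ell_1$ rate. The cone and the $\ell_2$ bound only give $\|t\hat{\bfu}\|_1\lesssim (M_1\lambda_1^2+M_2\lambda_2^2)/\min(\lambda_1,\lambda_2)$, and this quantity is at least $M_1\lambda_1+M_2\lambda_2$. It need not be $o(1)$ under the standing assumption $M_1\lambda_1^2+M_2\lambda_2^2=o(1)$: take $M_1\asymp N^{0.6}$ and $\lambda_1\asymp N^{-1/2}$ up to logarithms. There are two ways to repair this. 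You can add the assumption $(M_1\lambda_1^2+M_2\lambda_2^2)/\min(\lambda_1,\lambda_2)\to 0$. Alternatively, rescale so that $\|t\hat{\bfu}\|_2=r$ and read (C3) as holding on an $\ell_2$-neighbourhood of $\bftheta^0$, which is how the paper's proof actually invokes it. Then the $\ell_2$ rate $O_p\{(M_1\lambda_1^2+M_2\lambda_2^2)^{1/2}\}=o_p(1)$ gives the contradiction, and the rest of your argument is unchanged. The paper's proof has the same $\ell_1$/$\ell_2$ mismatch, so this fix applies to both.
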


Corollary~\ref{cor2} gives the usual sparse high-dimensional rate without an explicit approximation-error term. The key requirement is that the $\Delta_N$-induced population-score distortion be dominated by the corresponding penalty levels. That is, even when $\Delta_N$ is relatively large, (C5) can still hold provided that the remaining cluster heterogeneity is sufficiently orthogonal to the relevant directions in $\bfW$.

\subsection{Marginal working parameter under persistent residual heterogeneity} \label{subsec:pseudo}
When the dictionary leaves substantial residual heterogeneity that is aligned with the design, (C5) can fail, and for nonlinear links, the estimand itself changes. To make this explicit, we adopt a superpopulation formulation in this subsection and in Section \ref{sec:ps}, and let $\cfL (\bftheta) = \E \{ \ell(\bfW_{ij}^\top \bftheta; Y_{ij}) \}$ to be the population risk, with the corresponding {\it marginal-model parameter} defined by
\be \label{marginal}
\Tilde{\bftheta} = \underset{\bftheta \in \mathbb{R}^{p+p_0}}{\arg\min} ~ \cfL (\bftheta).
\ee
Equivalently, $\Tilde{\bftheta}$ solves the population score equation $\E [\nabla \cfL (\bftheta)] = \bfzero$. 

Under the superpopulation formulation with independent clusters, the corresponding marginal pseudo-parameter is the minimizer of the expected cluster loss. The notation $\widetilde{\bftheta}$ is used for the limiting population quantity.

The distinction between $\bftheta^0$ and $\widetilde{\bftheta}$ is intrinsic to nonlinear marginalization. Even when a stochastic residual effect satisfies $\E(\delta_i \mid \bfW_i) = 0$, in general,
\ba
\E \{ b'(\bfW_{ij}^\top \bftheta^0 + \delta_i ) \mid \bfW_i \} \ne b'(\bfW_{ij}^\top \bftheta^0),
\ea
so the marginal working score need not vanish at the structural coefficient. 

Let $\bfg(\bftheta) = \nabla \cfL (\bftheta)$ and define the integrated Hessian
\ba
\bar{\bfH} = \int_0^1 \nabla^2 \cfL \left\{ \bftheta^0 + t(\widetilde{\bftheta} - \bftheta^0) \right\} \,dt.
\ea
When $\bar{\bfH}_N$ is nonsingular, the fundamental theorem of calculus gives the identity
\ba
\widetilde{\bftheta}_N - \bftheta^0 = -\bar{\bfH}_N^{-1} \bfg_N^0.
\ea
Thus the same population-score discrepancy that determines structural consistency also determines the difference between the structural and marginal working targets.

\begin{proposition} \label{prop2}
Suppose $\bar{\bfH}$ is nonsingular and, for a target coordinate $l$,
\ba
\|\bfe_l^\top \bar{\bfH}^{-1}\|_1 \le C_H < \infty.
\ea
Then
\ba
|\widetilde \beta_{l} - \beta_l^0| \le C_H \|\bfg_N^0\|_\infty.
\ea
If, in addition, $b''(\cdot)$ is uniformly bounded and the augmented design is normalized as in (C3), then
\ba
|\widetilde \beta_{l} - \beta_l^0| = O(\Delta_N).
\ea    
\end{proposition}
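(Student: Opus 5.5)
The plan is to start from the integral identity just stated, $\widetilde{\bftheta} - \bftheta^0 = -\bar{\bfH}^{-1}\bfg_N^0$, where $\bfg_N^0$ is the population score at the structural parameter, $\bfg(\bftheta^0) = \nabla\cfL(\bftheta^0)$. I first justify this identity. By definition $\widetilde{\bftheta}$ satisfies $\bfg(\widetilde{\bftheta}) = \bfzero$. The fundamental theorem of calculus applied to $t \mapsto \bfg\{\bftheta^0 + t(\widetilde{\bftheta}-\bftheta^0)\}$ gives
\[
\bfzero - \bfg(\bftheta^0) = \bar{\bfH}(\widetilde{\bftheta}-\bftheta^0).
\]
This requires $\cfL$ to be twice continuously differentiable along the segment. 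That holds because $b$ is smooth and differentiation can be passed under the expectation under the boundedness of $b''$. Since $\bar{\bfH}$ is nonsingular, the identity follows.

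Extracting the $l$-th coordinate gives
\[
\widetilde\beta_l - \beta_l^0 = -\bfe_l^\top\bar{\bfH}^{-1}\bfg_N^0.
\]
Hölder's inequality ($\ell_1$–$\ell_\infty$ duality) then yields
\[
|\widetilde\beta_l - \beta_l^0| \le \|\bfe_l^\top\bar{\bfH}^{-1}\|_1 \|\bfg_N^0\|_\infty \le C_H\|\bfg_N^0\|_\infty,
\]
which is the first claim.

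For the second claim I bound $\|\bfg_N^0\|_\infty$ by $\Delta_N$. The canonical-GLM expression given after (C5) writes each coordinate as
\[
g_{N,k}^0 = N^{-1}\sum_{i,j} W_{ij,k}\bigl[b'(\bfW_{ij}^\top\bftheta^0) - b'(\bfW_{ij}^\top\bftheta^0+\delta_i)\bigr].
\]
By the mean value theorem and $\sup|b''| \le C_b$, each bracket is at most $C_b|\delta_i|$ in absolute value. Cauchy–Schwarz then gives
\[
|g_{N,k}^0| \le C_b\,\bigl(N^{-1}\|\bfW_k\|_2^2\bigr)^{1/2}\bigl(N^{-1}\|\bfD\bfdelta\|_2^2\bigr)^{1/2} \le C_b\Delta_N,
\]
using the column normalization $N^{-1}\|\bfW_k\|_2^2 \le 1$. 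Combining this with the first claim gives $|\widetilde\beta_l-\beta_l^0| \le C_HC_b\Delta_N = O(\Delta_N)$.

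The main obstacle is reconciling the superpopulation formulation with the finite-sample quantities. The score $\bfg_N^0$ and $\Delta_N$ are defined conditionally on $(\bfX,\bfD,\bfalpha)$, whereas $\widetilde{\bftheta}$ is defined through the population risk $\cfL$. I would handle this by interpreting $\bfg_N^0$ in the proposition as $\nabla\cfL(\bftheta^0)$, the population score at $\bftheta^0$, which is the notation already used in the displayed identity. In the superpopulation version, the Cauchy–Schwarz step becomes
\[
|\E\{W_{ij,k}(\cdot)\}| \le C_b\,\{\E W_{ij,k}^2\}^{1/2}\{\E\delta_i^2\}^{1/2},
\]
with $\E W_{ij,k}^2 \le 1$ as the population analogue of the normalization and $\Delta_N^2$ read as $\E\delta_i^2$. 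The argument is otherwise unchanged.
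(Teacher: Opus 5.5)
Your proposal is correct and follows the paper's own argument. The paper combines the integral-Hessian identity $\widetilde{\bftheta} - \bftheta^0 = -\bar{\bfH}^{-1}\bfg_N^0$ displayed just before the proposition with H\"older's inequality for the first bound, and uses the mean-value/Cauchy--Schwarz bound $\|\bfg_N^0\|_\infty \le C\Delta_N$ stated after (C5) for the second. (The supplementary section titled ``Proof of Proposition 2'' actually reproduces the argument for Theorem 3, so the main-text reasoning is the relevant proof, and your explicit reconciliation of the conditional score $\bfg_N^0$ with the superpopulation score $\nabla\cfL(\bftheta^0)$ is more careful than what the paper writes.)
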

Proposition \ref{prop2} bounds the shift coordinatewise, and it implies that a consistent estimator of $\Tilde{\bfbeta}$ can retain the sparsity pattern of $\bfbeta^0$. One example is the logistic regression model with $\delta_i \sim N(0, u^2)$. Using the cumulative Gaussian approximation to the logistic function, \citep{zeger1988models} showed that
\ba
\widetilde{\beta}_{l} \approx \frac{\beta_l^0}{\sqrt{1 + c^2 u^2}}, \quad c = \frac{16 \sqrt{3}}{15 \pi}.
\ea
For the linear model (\ref{lmm}), when $\bfH = \E \left( \bfW_{ij} \bfW_{ij}^\top \right)$ is nonsingular, then $\widetilde{\bftheta} - \bftheta^0 = \bfH^{-1} \E \left( \bfW_{ij} \delta_i \right)$, which equals $\bfzero$ when $\E(\delta_i | \bfW_{ij}) = 0$.

The next theorem shows that the GSHEL estimator is consistent around the marginal-model target $\widetilde \bftheta$ under conditions analogous to (C3) and (C4) at $\widetilde \bftheta$. Let $\widetilde{\cfM}_1 = \mbox{supp}(\widetilde{\bfbeta})$, $\widetilde{\cfM}_2 = \mbox{supp}(\widetilde{\bfgamma})$, $\widetilde M_1 = |\widetilde{\cfM}_1|$ and $\widetilde M_2 = |\widetilde{\cfM}_2|$. 

\begin{theorem} \label{thm2}
Suppose that the clusters are independent and identically distributed with $\max_i n_i \le C_n < \infty$, $\widetilde M_1 \lambda_1^2 + \widetilde M_2 \lambda_2^2 = o(1)$, the weighted RE condition in (C3) holds with $(\bftheta^0, \cfM)$ replaced by $(\widetilde{\bftheta}, \widetilde{\cfM})$, and
\ba
\| \nabla_{\bfbeta} \cfL_N(\widetilde{\bftheta}) \|_\infty \le c_s \lambda_1,
\quad
\| \nabla_{\bfgamma} \cfL_N(\widetilde{\bftheta}) \|_\infty \le c_s \lambda_2
\ea
with probability tending to one. Then
\ba
\cfD_N( \hat{\bftheta}, \widetilde{\bftheta} ) = O_p \left( \widetilde M_1 \lambda_1^2 + \widetilde M_2 \lambda_2^2 \right)
\ea
and
\ba
\lambda_1 \| \hat{\bfbeta} - \widetilde{\bfbeta} \|_1 + \lambda_2 \| \hat{\bfgamma} - \widetilde{\bfgamma} \|_1 = O_p \left( \widetilde M_1 \lambda_1^2 + \widetilde M_2 \lambda_2^2 \right).
\ea
\end{theorem}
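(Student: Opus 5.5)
The argument is the standard convex-loss LASSO oracle argument of Theorem \ref{thm1}, re-centered at $\widetilde{\bftheta}$. The re-centered version is actually simpler: the assumed score bound at $\widetilde{\bftheta}$ already contains the full empirical score. There is therefore no population-score distortion term $\bfg_N^0$, and hence no $\Delta_N^2$ term. Write $\hat{\bfu} = \hat{\bftheta} - \widetilde{\bftheta} = (\hat{\bfu}_1^\top, \hat{\bfu}_2^\top)^\top$, where $\hat{\bfu}_1 = \hat{\bfbeta} - \widetilde{\bfbeta}$ and $\hat{\bfu}_2 = \hat{\bfgamma} - \widetilde{\bfgamma}$. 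We work on the event $\cfE_N$ where the score bounds $\|\nabla_{\bfbeta}\cfL_N(\widetilde{\bftheta})\|_\infty \le c_s\lambda_1$ and $\|\nabla_{\bfgamma}\cfL_N(\widetilde{\bftheta})\|_\infty \le c_s\lambda_2$ hold, and the modified (C3) holds. This event has probability tending to one.

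\textbf{Basic inequality.} Optimality of $\hat{\bftheta}$ gives
\[
\cfL_N(\hat{\bftheta}) + \lambda_1\|\hat{\bfbeta}\|_1 + \lambda_2\|\hat{\bfgamma}\|_1 \le \cfL_N(\widetilde{\bftheta}) + \lambda_1\|\widetilde{\bfbeta}\|_1 + \lambda_2\|\widetilde{\bfgamma}\|_1 .
\]
Subtract the linear term to get
\[
\cfD_N(\hat{\bftheta},\widetilde{\bftheta}) \le -\nabla\cfL_N(\widetilde{\bftheta})^\top\hat{\bfu} + \lambda_1(\|\widetilde{\bfbeta}\|_1 - \|\hat{\bfbeta}\|_1) + \lambda_2(\|\widetilde{\bfgamma}\|_1 - \|\hat{\bfgamma}\|_1).
\]
On $\cfE_N$, H\"older's inequality bounds the score term by $c_s(\lambda_1\|\hat{\bfu}_1\|_1 + \lambda_2\|\hat{\bfu}_2\|_1)$. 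Split each penalty difference over $\widetilde{\cfM}_k$ and $\widetilde{\cfM}_k^c$ by the triangle inequality. Since $\cfD_N \ge 0$ by convexity, this yields
\[
\cfD_N + (1-c_s)\sum_k \lambda_k\|\hat{\bfu}_{k,\widetilde{\cfM}_k^c}\|_1 \le (1+c_s)\sum_k\lambda_k\|\hat{\bfu}_{k,\widetilde{\cfM}_k}\|_1 .
\]
So $\hat{\bfu}$ lies in the cone $\cfC_a(\widetilde{\cfM}_1,\widetilde{\cfM}_2)$ with $a = (1+c_s)/(1-c_s)$, which is at most the constant $a$ of (C3) for $c_s$ small.

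\textbf{Localization (the main obstacle).} The RE bound in (C3) holds only for $\|\bftheta - \widetilde{\bftheta}\|_1 \le r$, while $\cfD_N$ is a Taylor remainder along the segment from $\widetilde{\bftheta}$ to $\hat{\bftheta}$. Following the proof of Theorem \ref{thm1}, I would use convexity.

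\begin{enumerate}
\item If $\|\hat{\bfu}\|_1 > r$, set $\bar{\bftheta} = \widetilde{\bftheta} + t\hat{\bfu}$ with $t = r/\|\hat{\bfu}\|_1 \in (0,1)$.
\item By convexity of the penalized objective, $\bar{\bftheta}$ also satisfies the basic inequality, so $t\hat{\bfu}$ lies in the same cone with $\|t\hat{\bfu}\|_1 = r$.
\item Write $\cfD_N(\bar{\bftheta},\widetilde{\bftheta}) = \int_0^1 (1-s)\, N^{-1}\sum_{i,j} b''(\bfW_{ij}^\top(\widetilde{\bftheta} + s t\hat{\bfu}))(\bfW_{ij}^\top t\hat{\bfu})^2\,ds$. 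Every intermediate point is within $\ell_1$-distance $r$ of $\widetilde{\bftheta}$, so (C3) at $\widetilde{\bftheta}$ gives $\cfD_N(\bar{\bftheta},\widetilde{\bftheta}) \ge (\kappa/2)\|t\hat{\bfu}\|_2^2$.
\item Combine this with the cone inequality and $\|\bfv_{\widetilde{\cfM}_k}\|_1 \le \sqrt{\widetilde M_k}\|\bfv\|_2$. This gives $\lambda_1\|t\hat{\bfu}_1\|_1 + \lambda_2\|t\hat{\bfu}_2\|_1 = O(\widetilde M_1\lambda_1^2 + \widetilde M_2\lambda_2^2)/\min(\lambda_1,\lambda_2)$.
\item Since $\lambda_1 \asymp \sqrt{\log p/N}$ and $\lambda_2 \asymp \sqrt{\log p_0/N}$ are comparable up to log factors, I would like the right side of step 4 to be $o(1)$, which would contradict $\|t\hat{\bfu}\|_1 = r$. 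Strictly, the hypothesis $\widetilde M_1\lambda_1^2 + \widetilde M_2\lambda_2^2 = o(1)$ only gives $\widetilde M_k\lambda_k = o(1/\lambda_k)$, not $o(1)$. One also needs $\widetilde M_k\lambda_k = o(1)$ for each $k$ and $\lambda_1 \asymp \lambda_2$ (or an analogous ratio condition), as in the localization step of Theorem \ref{thm1}. I would take this from that proof; if it is not implicit there, it must be added as an assumption.
\end{enumerate}

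\textbf{Conclusion.} Once $\|\hat{\bfu}\|_1 \le r$, apply the same curvature bound directly to $\hat{\bfu}$. With $S = \lambda_1\|\hat{\bfu}_1\|_1 + \lambda_2\|\hat{\bfu}_2\|_1$, the cone inequality gives
\[
\cfD_N + (1-c_s)S \le 2\sum_k \lambda_k\|\hat{\bfu}_{k,\widetilde{\cfM}_k}\|_1 \le 2\Big(\sum_k \widetilde M_k\lambda_k^2\Big)^{1/2}\|\hat{\bfu}\|_2 \le 2\Big(\sum_k \widetilde M_k\lambda_k^2\Big)^{1/2}\Big(\tfrac{2}{\kappa}\cfD_N\Big)^{1/2}.
\]
Since $\cfD_N \ge 0$, this implies $\cfD_N \le C(\widetilde M_1\lambda_1^2 + \widetilde M_2\lambda_2^2)$. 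Substituting back gives $S \le C'(\widetilde M_1\lambda_1^2 + \widetilde M_2\lambda_2^2)$. Both bounds hold on $\cfE_N$, whose probability tends to one, so both statements hold in $O_p$.

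The i.i.d.-cluster and bounded-$n_i$ assumptions are not used in this deterministic argument. Their role is to justify that the empirical score at $\widetilde{\bftheta}$ is centered, since $\E\nabla\cfL_N(\widetilde{\bftheta}) = \bfzero$ by (\ref{marginal}), so that the assumed score bound is a plausible concentration statement.
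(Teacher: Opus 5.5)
Your route is the paper's route. The paper proves Theorem~\ref{thm2} by rerunning the argument of Corollary~\ref{cor2} at $\widetilde{\bftheta}$ with $c_0$ replaced by $c_s$: basic inequality, weighted cone, weighted RE plus Cauchy--Schwarz, then back-substitution. That is exactly your basic-inequality and conclusion steps.

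Where you differ is localization, and your concern there is justified. You will not find the missing step in Theorem~\ref{thm1}: its proof applies (C3) as soon as the cone is established, with no localization at all. The localization the paper does use, in the proof of Corollary~\ref{cor2}, has two problems:
\begin{itemize}
\item It is circular. It derives $\|\bfDelta\|_2 = O_p\{(M_1\lambda_1^2+M_2\lambda_2^2)^{1/2}\}$ on the event $\|\bfDelta\|_2\le r$, and then uses that bound to conclude the event has probability tending to one.
\item It uses the wrong norm. It localizes in $\ell_2$, whereas (C3) is stated on an $\ell_1$ ball around the centre.
\end{itemize}

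Your convexity-scaling step is the right repair, and your account of what it needs is correct. With the $\ell_1$ neighbourhood of (C3), the rescaled error only satisfies $\|t\hat{\bfu}\|_1 \lesssim (\widetilde M_1\lambda_1^2+\widetilde M_2\lambda_2^2)/\min(\lambda_1,\lambda_2)$. The contradiction with $\|t\hat{\bfu}\|_1=r$ therefore requires this quantity to be $o(1)$, roughly $\widetilde M_k\lambda_k=o(1)$ with comparable penalty levels. The stated condition $\widetilde M_1\lambda_1^2+\widetilde M_2\lambda_2^2=o(1)$ alone does not close the argument for general $b$.

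The alternative fix, closer to what the paper's proof seems to intend, is to state the curvature neighbourhood in (C3) as an $\ell_2$ ball. Scaling with $t=r/\|\hat{\bfu}\|_2$ then gives $\|t\hat{\bfu}\|_2 \le 2(1+c_s)\kappa^{-1}(\widetilde M_1\lambda_1^2+\widetilde M_2\lambda_2^2)^{1/2} = o(1) < r$. That is a contradiction, so the theorem holds under its stated rate condition. For the Gaussian model, $b''\equiv 1$ and no localization is needed.

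With either modification your proof is complete, and it closes a step the paper's proof leaves open. The missing condition is a defect of the statement, not of your reasoning.
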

Unlike Corollary \ref{cor2}, Theorem \ref{thm2} does not require the structural-score condition (C5), because the population score vanishes at $\widetilde{\bftheta}$ by definition under the superpopulation formulation. 


\section{Debiased inference} \label{sec:ps}
We next develop coordinatewise inference for the marginal-model working parameter $\widetilde{\bftheta}$. Structural inference for $\beta_l^0$ follows when $\widetilde \beta_l$ and $\beta_l^0$ are asymptotically equivalent at the $m^{-1/2}$ scale. The construction follows the inverse-Hessian debiasing principle for high-dimensional regression \citep{zhang2014confidence, van2014asymptotically}, with cluster-level asymptotic variance. 

Let $\widetilde{\bfH} = \nabla^2 \cfL(\widetilde{\bftheta})$ denote the population Hessian matrix. For a target index $l \in \{ 1, \cdots, p \}$, let $\bfa_l^0 = \widetilde{\bfH}^{-1}_l$ denote the $l$th column of $\widetilde{\bfH}^{-1}$. By inverting the KKT condition, \cite{van2014asymptotically} proposed the one-step bias-corrected estimator
\be \label{debias}
\hat{\beta}_l^d = \hat{\beta}_l - \hat{\bfa}_l^\top \nabla \cfL_N(\hat{\bftheta}) = \hat{\beta}_l + \hat{\bfa}_l^\top \frac{1}{N} \sum_{i,j} \bfW_{ij} (y_{ij} - \hat{\mu}_{ij}),
\ee
where $\hat{\bfa}_l$ is an estimator of $\bfa_l^0$. The purpose of the correction term is to remove the first-order shrinkage bias induced by the $L_1$ penalty.

To estimate $\bfa_l^0$, one commonly assumes that the precision matrix $\widetilde{\bfH}^{-1}$ is sparse and uses a nodewise regression argument \citep{meinshausen2006high,van2014asymptotically}. Specifically, let $\bfZ = \widehat{\bfV}^{1/2} \bfW$ be the Fisher-weighted augmented design, where $\widehat{\bfV}$ have diagonal entries $\hat v_{ij} = b''(\bfW_{ij}^\top \hat{\bftheta})$. We consider the nodewise regression \citep{van2014asymptotically}, where
\ba
\hat{\bfa}_l = \hat \tau_l^{-2} (1, ~- \hat{\bfpi}_l^\top)^\top, \quad \hat \tau_l^2 = \frac{1}{N} \bfZ_l^\top \hat{\bfr}_l, \quad \hat{\bfr}_l = \bfZ_l-\bfZ_{-l}\hat{\bfpi}_l, 
\ea
with 
\ba
\hat{\bfpi}_l = \underset{\bfpi}{\arg\min} \left\{ \frac{1}{2N} \|\bfZ_l - \bfZ_{-l}\bfpi\|_2^2 + \lambda_l^{\rm node} \|\bfpi\|_1 \right\},
\quad \lambda_l^{\rm node} \asymp \sqrt{\frac{\log(p+p_0)}{N}}.
\ea
At the population level, the debiasing step requires $\bfSigma_\delta \bfa_l^0 = \bfe_l$, so that the score
\ba
\bfa_l^{0,\top}\frac{1}{N}\sum_{i,j}\bfW_{ij}(y_{ij}-\mu_{ij})
\ea
is orthogonal to the nuisance directions. The nodewise regression step estimates $\bfa_l^0$ through a sparse approximation to the linear projection of $Z_l$ onto the column space of $\bfZ_{-l}$. When $\bfZ$ is multivariate Gaussian, this linear projection is exact, and $\bfa_l^0$ reflects the corresponding conditional independence structure \citep{ye2026high}. For non-Gaussian $\bfZ$, the linear projection remains well defined whenever the second moment of $Z_l$ exists.

Define the oracle observation- and cluster-level influence functions
\ba
\phi_{l,ij} = -(\bfa_l^0)^\top \nabla \ell( \bfW_{ij}^\top \widetilde{\bftheta}; Y_{ij} ), \quad
\Phi_{l,i} = \frac{m}{N} \sum_{j=1}^{n_i }\phi_{l,ij},
\ea
with plug-in estimates
\ba
\hat \phi_{l,ij} = -\hat{\bfa}_l^\top \nabla \ell( \bfW_{ij}^\top \hat{\bftheta}; Y_{ij} ), \quad
\hat \Phi_{l,i} = \frac{m}{N} \sum_{j=1}^{n_i} \hat \phi_{l,ij}.
\ea
Let
\be \label{cluster_var}
\hat V_l = \frac{1}{m} \sum_{i=1}^m (\hat \Phi_{l,i} - \bar \Phi_l)^2, \quad
\bar \Phi_l = \frac{1}{m} \sum_{i=1}^m \hat \Phi_{l,i}.
\ee
Then $\widehat{\rm se}(\hat \beta_l^{d}) = \sqrt{\hat V_l / m}$. We impose the following additional conditions for inference:

\begin{description}
\item[(I1)] The clusters are independent and identically distributed, $\max_i n_i \le C_n < \infty$, and $\E |\Phi_{l,i}|^{2 + \nu} < \infty$ for some $\nu > 0$.
\item[(I2)] The initial estimator satisfies the rates in Theorem~\ref{thm2}; in particular,
\ba
\lambda_1 \| \hat{\bfbeta} - \widetilde{\bfbeta} \|_1 + \lambda_2 \| \hat{\bfgamma} - \widetilde{\bfgamma} \|_1 = O_p \left( \widetilde M_1 \lambda_1^2 + \widetilde M_2 \lambda_2^2 \right).
\ea
\item[(I3)] For some $M_{a,l}$,
\ba
\|\hat{\bfa}_l - \bfa_l^0\|_1 = O_p(M_{a,l} \lambda_l^{\rm node}), \quad
\|\bfe_l - \widehat{\bfH} \hat{\bfa}_l\|_\infty = O_p(\lambda_l^{\rm node}),
\ea
where $\widehat{\bfH} = N^{-1} \bfW^\top \widehat{\bfV} \bfW$.
\item[(I4)] Let $\bfr_l = \bfe_l - \widehat{\bfH} \hat{\bfa}_l
= (\bfr_{l,1}^\top, \bfr_{l,2}^\top)^\top$, where $\bfr_{l,1} \in \reals^p$ and $\bfr_{l,2} \in \reals^{p_0}$. The debiasing remainders satisfy
\ba
&~& \sqrt m \left\{ \|\bfr_{l,1}\|_\infty \| \hat{\bfbeta} - \widetilde{\bfbeta} \|_1 + \|\bfr_{l,2}\|_\infty \| \hat{\bfgamma} - \widetilde{\bfgamma} \|_1 \right\} = o_p(1), \\
&~& \sqrt m\,
\| \hat{\bfa}_l - \bfa_l^0 \|_1 \| \nabla \cfL_N( \widetilde{\bftheta} )\|_\infty = o_p(1), \quad \sqrt m \left| \hat{\bfa}_l^\top (\widehat{\bfH} - \bar{\bfH}_N) ( \hat{\bftheta} - \widetilde{\bftheta} ) \right| = o_p(1)
\ea
where
\ba
\bar{\bfH}_N = \int_0^1 \nabla^2 \cfL_N \left\{ \widetilde{\bftheta} + t ( \hat{\bftheta} - \widetilde{\bftheta} ) \right\} dt.
\ea
\item[(I5)] The variance $V_l$ satisfies $0 < c_V \le V_l \le C_V < \infty$, and $m^{-1} \sum_{i=1}^m ( \hat \Phi_{l,i} - \Phi_{l,i} )^2 = o_p(1)$.
\end{description}
Conditions (I1) and (I5) provide moment conditions for a cluster-level central limit theorem. Conditions (I2)-(I4) control the initial-estimation, inverse-Hessian, and nonlinear Taylor remainders. In particular, combined with Theorem \ref{thm2} and the nodewise KKT bound, a simple sufficient rate for the first display in (I4) is
\ba
\sqrt m \, \lambda_l^{\rm node} \left( \widetilde M_1 \lambda_1^2 +  \widetilde M_2 \lambda_2^2 \right) \left( \frac{1}{\lambda_1} + \frac{1}{\lambda_2} \right) \rightarrow 0.
\ea

\begin{theorem} \label{thm4}
For any $l \in \{ 1, \ldots, p \}$, under (I1)-(I5),
\begin{enumerate}
\item[(I)] 
\ba
&~& \sqrt m( \hat \beta_l^{d} - \widetilde \beta_l) = \frac{1}{\sqrt m} \sum_{i=1}^m \Phi_{l,i} + o_p(1).
\ea
\item[(II)]  $\hat V_l \xrightarrow{P} V_l$ and 
\ba
\frac{\sqrt m ( \hat \beta_l^{d} - \widetilde \beta_l )}{\sqrt{\hat V_l}} \xrightarrow{d} N(0,1).
\ea
\end{enumerate}
\end{theorem}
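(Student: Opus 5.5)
The proof follows the standard one-step debiasing argument of \cite{van2014asymptotically}, adapted to the cluster-level central limit theorem. The plan is to decompose $\hat\beta_l^d - \widetilde\beta_l$ into a leading linear term plus three remainders, each of which is killed by one of the displays in (I4).

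We first write $\hat\beta_l - \widetilde\beta_l = \bfe_l^\top(\hat\bftheta - \widetilde\bftheta)$ and expand the score by the integral mean value theorem,
\ba
\nabla\cfL_N(\hat\bftheta) = \nabla\cfL_N(\widetilde\bftheta) + \bar{\bfH}_N(\hat\bftheta-\widetilde\bftheta).
\ea
Substituting into (\ref{debias}) gives
\ba
\hat\beta_l^d - \widetilde\beta_l &=& -(\bfa_l^0)^\top\nabla\cfL_N(\widetilde\bftheta) - (\hat\bfa_l-\bfa_l^0)^\top\nabla\cfL_N(\widetilde\bftheta) \\
&~& +\, (\bfe_l - \widehat{\bfH}\hat\bfa_l)^\top(\hat\bftheta-\widetilde\bftheta) - \hat\bfa_l^\top(\bar{\bfH}_N-\widehat{\bfH})(\hat\bftheta-\widetilde\bftheta),
\ea
using the symmetry of $\widehat\bfH$. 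The first term equals $N^{-1}\sum_{i,j}\phi_{l,ij} = m^{-1}\sum_i \Phi_{l,i}$ by the definitions of $\phi_{l,ij}$ and $\Phi_{l,i}$. The remaining three terms are handled as follows.
\begin{itemize}
\item The second term is bounded by Hölder's inequality by $\|\hat\bfa_l-\bfa_l^0\|_1\|\nabla\cfL_N(\widetilde\bftheta)\|_\infty$.
\item The third term is bounded by $\|\bfr_{l,1}\|_\infty\|\hat\bfbeta-\widetilde\bfbeta\|_1+\|\bfr_{l,2}\|_\infty\|\hat\bfgamma-\widetilde\bfgamma\|_1$, where we split blockwise so that the two penalty scales are respected.
\item The fourth term is exactly the last quantity in (I4).
\end{itemize}
Multiplying by $\sqrt m$, all three remainders are $o_p(1)$ by (I4), with (I2) and (I3) supplying the rates that make (I4) meaningful. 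This proves (I).

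For (II), the $\Phi_{l,i}$ are i.i.d. across clusters by (I1). They have mean zero because $\E\nabla\ell$ summed over a cluster vanishes at $\widetilde\bftheta$, the population score being zero by definition (\ref{marginal}) and $m/N$ being bounded under $\max_i n_i\le C_n$. They also have finite $2+\nu$ moments. Lyapunov's CLT therefore gives $m^{-1/2}\sum_i\Phi_{l,i}\xrightarrow{d}N(0,V_l)$ with $V_l=\Var(\Phi_{l,i})\ge c_V>0$. One point needs care here: $m/N$ is random if the $n_i$ are random. We would either condition on the cluster sizes or note that $m/N\to 1/\E n_i$ and absorb this by Slutsky's theorem.

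For variance consistency, write $\hat V_l - V_l$ as the sum of two pieces:
\begin{itemize}
\item the oracle piece $m^{-1}\sum(\Phi_{l,i}-\bar\Phi^0)^2 - V_l$, which is $o_p(1)$ by the weak law of large numbers, using the $2+\nu$ moments;
\item a cross-difference controlled by Cauchy–Schwarz through $m^{-1}\sum(\hat\Phi_{l,i}-\Phi_{l,i})^2=o_p(1)$ from (I5), together with boundedness of $m^{-1}\sum\Phi_{l,i}^2$.
\end{itemize}
Slutsky's theorem then yields the studentized limit.

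The main obstacle is the bookkeeping around the Hessian. Specifically, we must verify that the leading term is centered exactly at the population score (so that $\widetilde\bftheta$, not $\bftheta^0$, is the correct centering) and align the $1/N$ normalization with the $m/N$ factor in $\Phi_{l,i}$. The remainder bounds themselves are mechanical once the decomposition above is fixed.
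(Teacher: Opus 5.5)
Your proposal is correct and follows essentially the same route as the paper. The paper uses the same integral Taylor expansion of the score and the same decomposition via the symmetry of $\widehat{\bfH}$, with each remainder killed by a display of (I4); it then applies a Lyapunov CLT to the cluster-level $\Phi_{l,i}$ and obtains variance consistency from the law of large numbers plus Cauchy--Schwarz with (I5). Your remark about the randomness of $m/N$ when cluster sizes are random addresses a point the paper's proof passes over (it simply asserts $\E(\Phi_{l,i})=0$), so conditioning on the cluster sizes, as you suggest, is a sensible way to make that step rigorous.
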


Theorem \ref{thm4} gives coordinatewise inference for a prespecified coordinate of the marginal working parameter. If the reported coordinates are selected data-adaptively, the selection step should be carried out on an independent set of clusters. An advantage of the debiased inference is that its validity does not require a sure-screening property \citep{fan2008sis}.

\begin{corollary} \label{cor_structural_inference}
Under the conditions of Theorem~\ref{thm4}, if

$$
|\widetilde{\beta}_l-\beta_l^0|=o(m^{-1/2}),
$$

then

$$
\frac{\sqrt m(\hat\beta_l^d-\beta_l^0)}
{\sqrt{\hat V_l}}
\xrightarrow{d}N(0,1).
$$

\end{corollary}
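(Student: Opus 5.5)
The plan is a Slutsky argument built on Theorem~\ref{thm4}(II), which already gives
\[
T_l := \frac{\sqrt m(\hat\beta_l^d-\widetilde\beta_l)}{\sqrt{\hat V_l}} \xrightarrow{d} N(0,1).
\]
We decompose the structural statistic as
\[
\frac{\sqrt m(\hat\beta_l^d-\beta_l^0)}{\sqrt{\hat V_l}} = T_l + \frac{\sqrt m(\widetilde\beta_l-\beta_l^0)}{\sqrt{\hat V_l}},
\]
and show that the second, deterministic-numerator term is $o_p(1)$.

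For the numerator, the assumption $|\widetilde\beta_l-\beta_l^0|=o(m^{-1/2})$ gives $\sqrt m(\widetilde\beta_l-\beta_l^0)\to 0$ as a nonrandom sequence.

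For the denominator, Theorem~\ref{thm4}(II) gives $\hat V_l\xrightarrow{P}V_l$, and (I5) gives $V_l\ge c_V>0$. Hence, with probability tending to one, $\hat V_l\ge c_V/2$, so $1/\sqrt{\hat V_l}=O_p(1)$. Since $V_l$ may depend on $m$, this step uses only the uniform lower bound $c_V$, not convergence of $V_l$ itself.

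The product of a vanishing deterministic sequence and an $O_p(1)$ term is $o_p(1)$, and Slutsky's lemma then yields the stated $N(0,1)$ limit.

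There is no substantive obstacle. The only point needing care is that the denominator is bounded away from zero uniformly, which is exactly what the lower bound $c_V$ in (I5) supplies. Proposition~\ref{prop2} is not needed here, since the bias condition is assumed directly. One may note that it provides a sufficient condition: $C_H\|\bfg_N^0\|_\infty=o(m^{-1/2})$, or $\Delta_N=o(m^{-1/2})$ under bounded $b''$.
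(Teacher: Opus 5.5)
Your proposal is correct and matches the intended argument: the paper gives no separate proof of this corollary, treating it as an immediate consequence of Theorem~\ref{thm4}(II) through exactly your decomposition into the studentized working-parameter statistic plus the bias term $\sqrt m(\widetilde\beta_l-\beta_l^0)/\sqrt{\hat V_l}=o_p(1)$, followed by Slutsky's lemma. Your use of the lower bound $c_V$ from (I5) to keep $1/\sqrt{\hat V_l}$ bounded in probability, and your remark that Proposition~\ref{prop2} only supplies a sufficient condition such as $\Delta_N=o(m^{-1/2})$, are consistent with the paper's discussion following the corollary.
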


Thus, inference for the population working parameter also yields valid inference for the structural coefficient whenever the difference between the two targets is negligible at the cluster-level inference scale. By Proposition~\ref{prop2}, $\Delta_N=o(m^{-1/2})$ provides one sufficient condition under its stated regularity conditions.

\section{Simulation studies} \label{sec:sim}
This section reports the finite-sample performance of the proposed procedure. Implementation details of GSHEL, and an iterative algorithm, IGSHEL, are provided in the supplementary material.

\subsection{Simulation settings} \label{subsec:sim0}
We consider $m = 400$ clusters of equal size $n = n_i = 4$ and $p = 1000$ covariates. The heterogeneous intercepts $\alpha_i$ are randomly sampled from $F_{\alpha}$. Given the cluster, the covariates $\bfX_{ij}$ are drawn from a multivariate Gaussian distribution with mean vector $\bfmu_i$ and precision matrix $\bfTheta$, where $\bfTheta$ is block diagonal with block size $5$. Within each block, we set $\Theta_{ll} = 1$ and $\Theta_{ll'} = 0.5$ for $l \ne l'$. For each replication, we randomly select $p_0 = 0, 50, 100, 200, 500,$ or $800$ covariates to have heterogeneous distributions across clusters. That is, for $l \in \cfP_0^c$, we set $\mu_{l,i} = 0$ for all $i$; for $l \in \cfP_0$, the cluster-specific means $\mu_{l,i}$ are independently sampled from $F_l$.

We consider two specifications for $F_{\alpha}$ and $F_l$. The first, denoted by ``Independent", assumes that $F_{\alpha}$ and $F_l$ are independent for all $l$. In this setting, we let each $F_l$ be a standard Gaussian distribution, and let $F_{\alpha}$ follow either a standard Gaussian distribution or a Gaussian mixture, ($0.5N(-1, ~0.5) + 0.5N(1, ~0.5)$). In the second setting, denoted by ``Endogenous", we first generate a latent variable $U_i$, and then generate $\mu_{l,i}$ and $\alpha_i$ according to
\ba
\mu_{l,i} = h_l U_i + n^{-1} Z_{l,i}, ~~ \alpha_i = 0.8 U_i + 0.2 Z_{i}, 
\ea
where $Z_{l, i}$ and $Z_i$ follow standard Gaussian distributions and $h_l \sim \text{U}(0,1)$. As in the Independent setting, we let $U_i$ follow either a standard Gaussian or a Gaussian mixture.

For variable selection, parameter estimation, and prediction, we compare the proposed GSHEL and IGSHEL algorithms with the marginal-model LASSO estimator. For all methods, the tuning parameter is selected by ten-fold cross-validation (CV). We choose $\lambda$ either by minimizing the CV error ($\lambda_{\min}$) or by using the one-standard-error rule ($\lambda_{\text{1se}}$) \citep{hastie2009elements}. In the iterative procedure, ``ISHEL1'' uses $\lambda_1 = \lambda_{\text{1se}}$ in Step 6 of Algorithm 2, whereas ``ISHEL2'' uses $\lambda_1 = \lambda_{\min}$. We consider both the linear model (\ref{lmm}) and the logistic model for (\ref{glmm}). We set $\cfM_1 = \{ 1,6,11,12,16,17 \}$ and $\bfbeta_{\cfM_1} = (0.5, 0.5, 1, 1, 1.5, 1.5)^\top$. To evaluate variable selection performance, we record the number of false positives (FP) and true positives (TP). For the linear model, parameter estimation and prediction performance are assessed using the root mean squared error (RMSE), $\lVert \hat{\bfbeta} - \bfbeta^0 \rVert_1$, and the residual ICC. For the logistic model, prediction performance is evaluated using sensitivity and specificity. 

For inference, we consider a weaker signal vector $\bfbeta_{\cfM_1} = (0.25, 0.25, 0.40, 0.40, 0.60, 0.60)^\top$ and generate independent screening and inference samples, each containing 200 clusters. The screening sample determines the coordinates to be tested, while GSHEL estimation, debiasing, and interval construction use only the inference sample. As a benchmark, we fit a GLMM via unpenalized maximum likelihood estimation to the same inference sample. We report the empirical type I error rate, power, and median length of the 95\% confidence interval. The definitions of these measures are given in Table S1. All performance measures are computed from 200 simulated datasets.

\subsection{Simulation results for variable selection, parameter estimation, and prediction} \label{subsec:sim1}
The overall pattern is that the proposed estimators reduce the two measures of target shift, the number of false positives and $\lVert \hat{\bfbeta} - \bfbeta^0 \rVert_1$, to the level of the benchmark case $p_0 = 0$ without loss of prediction accuracy. Across all settings, all methods are able to identify all nonzero coefficients, yielding TP $= 6$. For linear models with Gaussian random effects, the numbers of false positives, RMSE, $\lVert \hat{\bfbeta} - \bfbeta^0 \rVert_1$, and residual ICC are presented in Figures \ref{Fig_FP} and \ref{Fig_e}. As expected, $\lambda_{\text{1se}}$ generally leads to better variable selection, whereas $\lambda_{\min}$ tends to yield better prediction.

When $p_0 = 0$, as suggested in Theorem \ref{thm0}, no covariates are available to synthesize the random intercepts, and the marginal model LASSO estimator nearly coincides with SHEL. As reflected by the small values of $\lVert \hat{\bfbeta} - \bfbeta^0 \rVert_1$ and the number of false positives, the estimation target $\bfbeta^0$ is consistently estimated, whereas RMSE and residual ICC are larger than cases with $p_0 > 0$ because the omitted random intercepts are absorbed into the residual variation.

As $p_0$ increases, the marginal model LASSO estimator tends to include more noise variables, resulting in larger FP and estimation error $\lVert \hat{\bfbeta} - \bfbeta^0 \rVert_1$. This degradation is due to contamination of $\bfgamma^\star$ and target shift, as suggested in Theorem \ref{thm0}. When comparing the two dependence settings, the marginal-model LASSO is less sensitive to increases in $p_0$ under the Independent setting than under the Endogenous setting. Under the Independent setting, the synthetic approximation is driven by high-dimensional spurious correlations among variables in $\cfP_0$, rather than by genuine dependence between the covariates and the random intercepts. In contrast, our proposed method performs similarly to the benchmark case $p_0 = 0$.

For both the proposed methods and the marginal-model LASSO, RMSE and residual ICC decrease as $p_0$ increases, reflecting the increasing ability of the synthetic approximation to explain cluster-level heterogeneity. SHEL and ISHEL perform similarly under the Endogenous setting, whereas ISHEL outperforms SHEL under the Independent setting. An explanation is that, under Independence, the optimal coefficient vector $\bfgamma^0$ is generally not sparse, as the synthetic approximation relies on accumulated spurious high-dimensional correlations, and the iterative procedure is better suited to accommodate this structure.

For the logistic model with Gaussian random effects, the numbers of false positives, sensitivity, and specificity are presented in Figures S1 and S2. The conclusions are similar to those for the linear model, although the marginal-model LASSO appears to be less affected by the target shift, presumably because a binary response carries less information about the random intercepts than a continuous one. In addition, our proposed methods show similar performance under Gaussian mixture random effects (see Figures S3-S6), implying the robustness to non-Gaussian deviations in the random effects distribution.

\subsection{Simulation results for post-selection inference} \label{subsec:sim2}
Table \ref{tab:debias} summarizes the performance of the debiased procedure under Gaussian cluster effects. The proposed debiased GSHEL procedure provided stable type I error control across both the Endogenous and Independent settings. For the linear model, the pooled type I error remained close to the nominal 0.05 level across all values of $p_0$, while power was one despite the relatively weak signal setting. For the logistic model, pooled type I error was similarly stable, and power remained above 0.94 across all settings.

The behavior of the unpenalized mixed-effects refit differed substantially between the two dependence structures. Under the Independent setting, the GLMM generally maintained reasonable type I error control. Under the Endogenous setting, however, its type I error increased markedly as the number of covariates with heterogeneous cluster means increased. 

These results are consistent with the identification argument developed in Section \ref{subsec:interpret}. When the latent cluster effect is independent of the covariates, a conventional random-effects model can recover the structural effects without explicitly modeling the dependence between the cluster effect and the covariates. Under endogenous heterogeneity, this exogeneity condition is violated, resulting in distorted coefficient estimates and anti-conservative inference from the low-dimensional mixed-effects refit. Similar performance is observed under Gaussian mixture random effects; see Table S2. 

\section{Application to longitudinal neutrophil transcriptomics in COVID-19} \label{sec:real}
\subsection{Background and data description}
We applied GSHEL to a publicly available longitudinal bulk RNA-seq dataset of enriched blood neutrophils from hospitalized COVID-19 patients \citep{lasalle2022longitudinal}, available from GEO (GSE212041). The original study enrolled 306 COVID-19-positive patients at Massachusetts General Hospital between March and May 2020. Blood samples were collected on day 0, day 3, and day 7 after enrollment, when available, and neutrophils were enriched from whole blood for bulk RNA sequencing. After quality control, 629 samples from 306 patients were available for analysis. Gene expression was quantified as $\log_2(\mathrm{TPM}+1)$, and the binary outcome was whether the patient reached maximum disease severity within 28 days. Age and BMI were included as candidate clinical covariates. Patients are the clusters, and the aim is to identify genes whose expression is associated with severity while accounting for persistent between-patient heterogeneity.

\subsection{Variable selection results}

Starting from 5{,}000 candidate genes, our preliminary screen retained 3{,}481 genes (two-sample $t$-test $p < 0.05$), of which 2{,}397 exhibited significant within-cluster correlation. The marginal model LASSO selected 75 genes, while GSHEL selected 37 genes, and the iterative GSHEL selected 35 of the same 37 genes. Neither age nor BMI was selected. GSHEL therefore produced a substantially more parsimonious model that retains the main biological signals described below. The debiased inference identified 10 of the 37 genes as individually significant ($p < 0.05$). The selected genes are summarized in Table S3.

\subsection{Biological interpretation}
We assessed biological relevance by comparing the 37 GSHEL-selected genes with the findings of the original study. Three selected genes, \textit{S100A12}, \textit{CD177}, and \textit{MCEMP1}, are markers of the granulocytic myeloid-derived suppressor cell subtype identified in the original analysis and associated with COVID-19 severity \cite{lasalle2022longitudinal}. GSHEL also selected \textit{SERPINB2}, which was highlighted in the original study because its longitudinal trajectory differed between severe and non-severe patients. These overlaps are consistent with biological patterns reported in the original study.

Several additional selected genes belong to pathways implicated in the original study, including MHC class II activity, interferon response, metabolic and oxidative-stress processes, and myeloid differentiation. Two immunoglobulin genes were also selected. Because immunoglobulin expression in neutrophil-enriched samples may partly reflect differences in cell composition \cite{lasalle2022longitudinal}, we conducted a sensitivity analysis including the immunoglobulin score from the original study as an additional covariate. The sensitivity analysis retained the principal severity-associated markers and both immunoglobulin genes. Both immunoglobulin genes remained selected after this adjustment.


\section{Discussion} \label{sec:disc}
In this article, we have proposed SHEL and GSHE, which use cluster-constant synthetic approximations, constructed from the covariates alone, for cluster effects in high-dimensional clustered
data. The proposed method is computationally feasible and retains the interpretation of the scientifically relevant regression coefficients. The results are supported by oracle inequalities, by consistency for the structural coefficient when the residual heterogeneity is weakly aligned with
the augmented design and for the marginal working parameter otherwise, and by a debiased
inference procedure based on the inverse-Hessian principle.

{\it Penalty functions.} In this article, we focus on the $\ell_1$-penalty function (LASSO). Our method can be adapted to other penalty functions, such as the elastic net or SCAD, and the results of Section \ref{sec:thm} carry over with usual modifications. Although debiased inference is most commonly developed for the LASSO, the underlying bias-correction idea is more general and can be extended to other penalized estimators, and additional technical arguments are typically needed to account for their different shrinkage and optimization properties. For an example of the elastic net, the KKT conditions contain an additional ridge term, so that both the bias-correction step and the inverse-Hessian approximation must be constructed for the ridge-adjusted score map, and additional rate conditions are needed to ensure that the resulting bias is asymptotically negligible for inference \citep{zhang2022elastic}. 

{\it Post-selection inference for ISHEL}. For the SHEL estimator, once $\bfB$ is constructed, the design matrix entering the penalized optimization is fixed. This makes the post-selection problem comparatively tractable, since the KKT conditions lead to a well-defined geometric characterization of the selection event and the associated bias-correction arguments can be developed under a stable design. In contrast, the ISHEL algorithm repeatedly updates the synthetic approximation in a data-dependent manner, so the effective design matrix evolves with the response across iterations. Consequently, the final selection event is determined by a sequence of adaptive design updates rather than by a single fixed optimization problem, and standard post-selection inference tools no longer apply directly. This iterative dependence creates substantial technical challenges for both selective-inference and debiasing-based procedures, and developing valid inference methods for such design-updating algorithms remains an important direction for future work. 


{\it Multiple random effects.} 
We consider only heterogeneous intercepts in this work. Other forms of cluster-level heterogeneity, such as random slopes or nested random effects, can be accommodated in the same framework by augmenting the design with cluster-constant summaries in the $\bfB$ matrix. Extending the theory and the inference procedure to several such components
simultaneously, and selecting among them, are directions for future work.

\clearpage
\newpage

\begin{figure}[h!]
    \centering
    \includegraphics[width=1\linewidth]{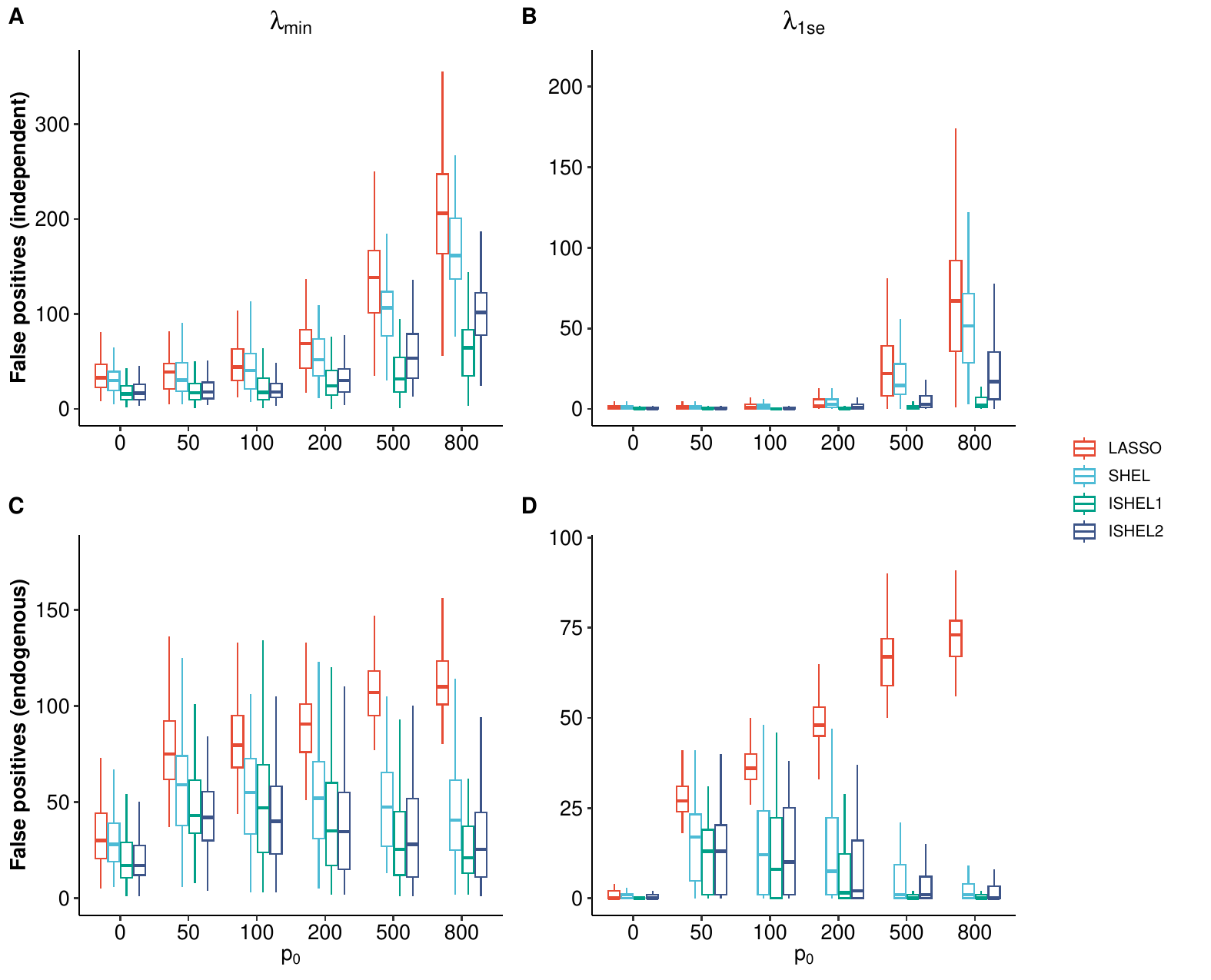}
    \caption{Number of false selections for high-dimensional linear random-effects models with Gaussian random intercepts. }
    \label{Fig_FP}
\end{figure}

\begin{figure}[h!]
    \centering
    \includegraphics[width=1\linewidth]{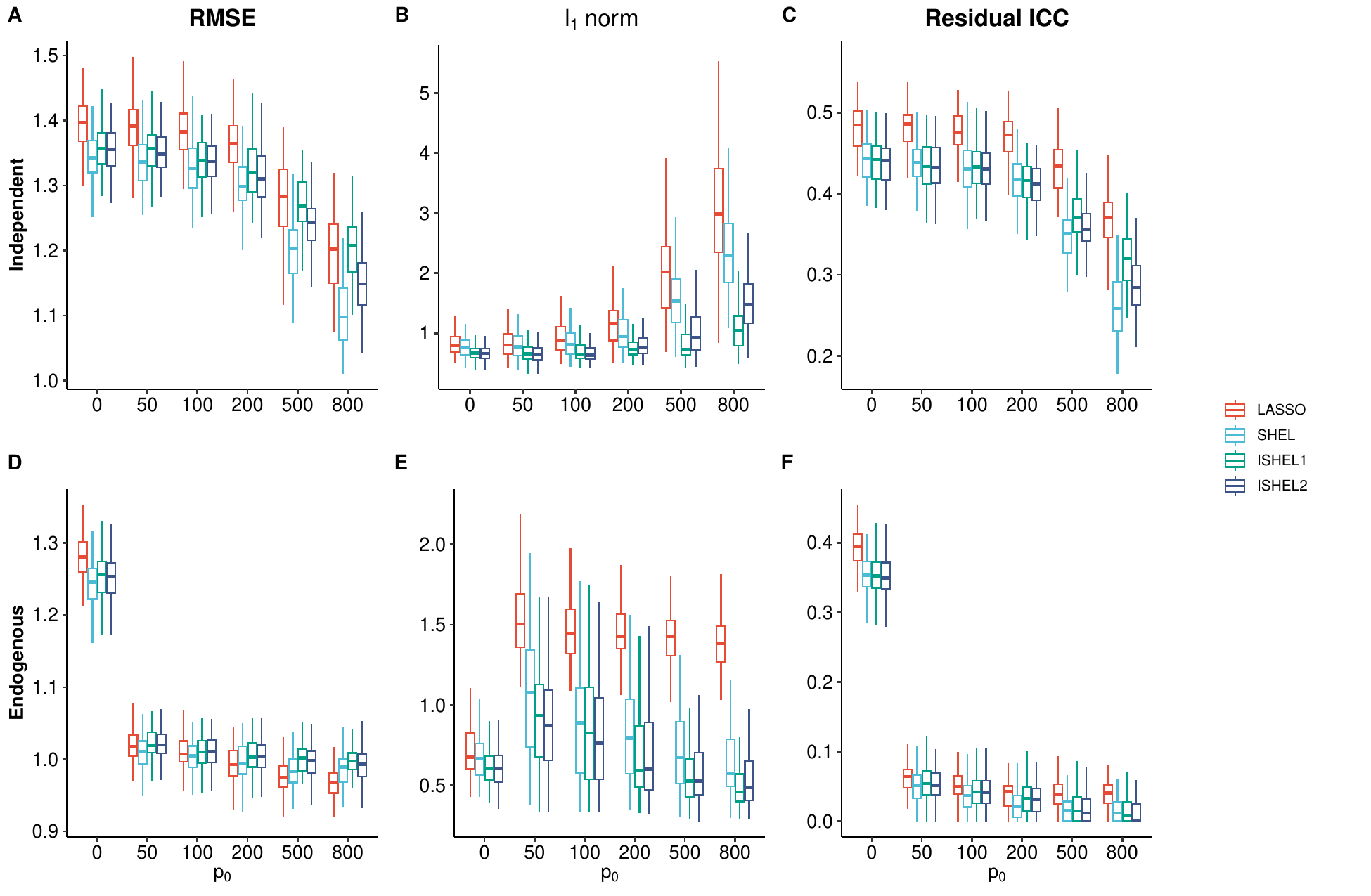}
    \caption{Empirical RMSE, $\lVert \hat{\bfbeta} - \bfbeta^0 \rVert_1$, and residual ICC in high-dimensional linear random-intercepts models with Gaussian random intercepts. }
    \label{Fig_e}
\end{figure}

\begin{table}[!h]
\centering
\caption{\label{tab:debias} Debiased inference under Gaussian random effects. }
\centering
\resizebox{\ifdim\width>\linewidth\linewidth\else\width\fi}{!}{
\fontsize{8}{10}\selectfont
\begin{threeparttable}
\begin{tabular}[t]{llccccccc}
\toprule
\multicolumn{3}{c}{ } & \multicolumn{2}{c}{Pooled type I error} & \multicolumn{2}{c}{Power} & \multicolumn{2}{c}{Median CI length} \\
\cmidrule(l{3pt}r{3pt}){4-5} \cmidrule(l{3pt}r{3pt}){6-7} \cmidrule(l{3pt}r{3pt}){8-9}
Model & Setting & $p_0$ & Debiased & GLMM & Debiased & GLMM & Debiased & GLMM\\
\midrule
 &  & 0 & 0.076 & 0.072 & 1.000 & 1.000 & 0.164 & 0.121\\

 &  & 50 & 0.055 & 0.196 & 1.000 & 1.000 & 0.130 & 0.121\\

 &  & 100 & 0.049 & 0.309 & 1.000 & 1.000 & 0.128 & 0.121\\

 &  & 200 & 0.044 & 0.489 & 1.000 & 1.000 & 0.128 & 0.120\\

 &  & 500 & 0.051 & 0.589 & 1.000 & 1.000 & 0.130 & 0.115\\

 & \multirow{-6}{*}{\raggedright\arraybackslash Endogenous} & 800 & 0.051 & 0.556 & 1.000 & 1.000 & 0.130 & 0.113\\

 &  & 0 & 0.042 & 0.028 & 1.000 & 1.000 & 0.177 & 0.122\\

 &  & 50 & 0.093 & 0.056 & 1.000 & 1.000 & 0.176 & 0.122\\

 &  & 100 & 0.095 & 0.061 & 1.000 & 1.000 & 0.175 & 0.122\\

 &  & 200 & 0.025 & 0.062 & 1.000 & 1.000 & 0.169 & 0.121\\

 &  & 500 & 0.067 & 0.027 & 1.000 & 1.000 & 0.161 & 0.118\\

\multirow{-12}{*}[0.5\dimexpr\aboverulesep+\belowrulesep+\cmidrulewidth]{\raggedright\arraybackslash Linear} & \multirow{-6}{*}{\raggedright\arraybackslash Independent} & 800 & 0.073 & 0.073 & 1.000 & 1.000 & 0.152 & 0.114\\
\cmidrule{1-9}
 &  & 0 & 0.064 & 0.091 & 0.953 & 0.985 & 0.233 & 0.278\\

 &  & 50 & 0.054 & 0.136 & 0.966 & 0.992 & 0.235 & 0.273\\

 &  & 100 & 0.049 & 0.160 & 0.975 & 0.990 & 0.235 & 0.276\\

 &  & 200 & 0.060 & 0.242 & 0.974 & 0.987 & 0.239 & 0.271\\

 &  & 500 & 0.052 & 0.201 & 0.951 & 0.988 & 0.251 & 0.278\\

 & \multirow{-6}{*}{\raggedright\arraybackslash Endogenous} & 800 & 0.050 & 0.197 & 0.942 & 0.992 & 0.258 & 0.285\\

 &  & 0 & 0.049 & 0.075 & 0.955 & 0.979 & 0.232 & 0.287\\

 &  & 50 & 0.053 & 0.060 & 0.955 & 0.985 & 0.231 & 0.287\\

 &  & 100 & 0.053 & 0.064 & 0.956 & 0.985 & 0.231 & 0.288\\

 &  & 200 & 0.057 & 0.061 & 0.963 & 0.988 & 0.229 & 0.285\\

 &  & 500 & 0.046 & 0.054 & 0.960 & 0.992 & 0.227 & 0.281\\

\multirow{-12}{*}[0.5\dimexpr\aboverulesep+\belowrulesep+\cmidrulewidth]{\raggedright\arraybackslash Logistic} & \multirow{-6}{*}{\raggedright\arraybackslash Independent} & 800 & 0.053 & 0.077 & 0.953 & 0.991 & 0.229 & 0.260\\
\bottomrule
\end{tabular}
\begin{tablenotes}
\item \textit{Note: } 
\item Debiased denotes debiased GSHEL introduced in Section \ref{sec:ps}. GLMM denotes the unpenalized low-dimensional mixed-effects refit. Type I error and power are pooled over the selected null and active coordinates, respectively. CI length is summarized by its median over simulation replications.
\end{tablenotes}
\end{threeparttable}}
\end{table}

\clearpage
\newpage

\appendix
\renewcommand{\theequation}{S\arabic{equation}}
\setcounter{equation}{0}

\renewcommand{\thefigure}{S\arabic{figure}}
\setcounter{figure}{0}

\renewcommand{\thetable}{S\arabic{table}}
\setcounter{table}{0}

\begin{center}
{\bf\Large Supplementary Material}    
\end{center}

\section{Proofs}
\subsection{Proof of Theorem 1}
Let $\bfDelta = \hat{\bfbeta} - \bfbeta^\star$. Since $(\bfD \bfdelta^\star)^\top \bfq^\star = 0$, we have $N^{-1/2} \|\bfr^\star\|_2 = R_N^\star$. Optimality of the marginal-model LASSO gives
\be \label{pf1:basic}
\frac{1}{2N}\|\bfX \bfDelta\|_2^2 \le \frac{1}{N}(\bfr^\star + \bfepsilon)^\top \bfX \bfDelta + \lambda \{ \| \bfbeta^\star \|_1 - \| \bfbeta^\star + \bfDelta\|_1\}.
\ee
On the event $\|N^{-1} \bfX^\top \bfepsilon\|_\infty \le \lambda/4$,
\ba
\frac{1}{N} \bfepsilon^\top \bfX \bfDelta \le \frac{\lambda}{4} \|\bfDelta\|_1.
\ea
Moreover,
\ba
\|\bfbeta^\star\|_1 - \|\bfbeta^\star + \bfDelta\|_1 \le \|\bfDelta_{\cfM^\star}\|_1 - \|\bfDelta_{\cfM^{\star,c}}\|_1.
\ea
Hence, 
\be \label{pf:marginal_master}
\frac{1}{2N} \|\bfX \bfDelta\|_2^2 + \frac{3 \lambda}{4} \| \bfDelta_{\cfM^{\star,c}} \|_1 \le \frac{1}{N} (\bfr^\star)^\top \bfX \bfDelta + \frac{5 \lambda}{4} \| \bfDelta_{\cfM^\star} \|_1.
\ee

For part I, Cauchy-Schwarz inequality gives
\be \label{pf1:ineq}
\frac{1}{N} \left| (\bfr^\star)^\top \bfX \bfDelta \right| \le R_N^\star
\frac{\| \bfX \bfDelta \|_2}{\sqrt N}.
\ee
If
\ba
\frac{5\lambda}{4} \|\bfDelta_{\cfM^\star}\|_1 \le 2 R_N^\star \frac{\|\bfX \bfDelta\|_2}{\sqrt N},
\ea
then (\ref{pf:marginal_master}) immediately yields
\ba
\frac{1}{N} \|\bfX \bfDelta\|_2^2 \le C(R_N^\star)^2.
\ea
Otherwise, the residual term can be absorbed into the active-set term,
and (\ref{pf:marginal_master}) implies
\ba
\|\bfDelta_{\cfM^{\star,c}}\|_1 \le 3 \|\bfDelta_{\cfM^{\star}}\|_1.
\ea
The RE condition and Cauchy--Schwarz then give
\ba
\|\bfDelta_{\cfM^{\star}}\|_1 \le \sqrt{M^\star} \|\bfDelta\|_2 \le \sqrt{\frac{M^\star}{\kappa}} \frac{\|\bfX \bfDelta\|_2}{\sqrt N}.
\ea
Substitution into (\ref{pf:marginal_master}) yields
\ba
\frac{1}{N} \|\bfX \bfDelta\|_2^2 \le C \frac{\lambda^2 M^\star}{\kappa}.
\ea
Combining the two cases proves part (I).

For part (II), the additional condition gives
\ba
\frac{1}{N} \left| (\bfr^\star)^\top \bfX \bfDelta \right| \le
\left\| N^{-1} \bfX^\top \bfr^\star \right\|_\infty \|\bfDelta\|_1 \le \frac{\lambda}{4} \|\bfDelta\|_1.
\ea
Together with the stochastic-score bound,
(\ref{pf1:basic}) therefore implies
\ba
\frac{1}{2N} \|\bfX \bfDelta\|_2^2 + \frac{\lambda}{2} \| \bfDelta_{\cfM^{\star,c}} \|_1 \le \frac{3 \lambda}{2} \|\bfDelta_{\cfM^{\star}} \|_1.
\ea
Consequently,
\ba
\|\bfDelta_{\cfM^{\star,c}}\|_1 \le 3\|\bfDelta_{\cfM^{\star}}\|_1.
\ea
Applying the RE condition again yields
\ba
\frac{1}{2N} \|\bfX \bfDelta\|_2^2 \le \frac{3 \lambda}{2} \sqrt{M^\star} \|\bfDelta\|_2 \le \frac{3 \lambda}{2} \sqrt{\frac{M^\star}{\kappa}} \frac{\|\bfX \bfDelta\|_2}{\sqrt N},
\ea
which gives
\ba
\frac{1}{N} \|\bfX \bfDelta\|_2^2 \le C \frac{\lambda^2 M^\star}{\kappa}.
\ea
Finally,
\ba
\|\bfDelta\|_1 \le 4\|\bfDelta_{\cfM^\star}\|_1 \le 4 \sqrt{\frac{M^\star}{\kappa}} \frac{\|\bfX \bfDelta\|_2}{\sqrt N} \le C \frac{\lambda M^\star}{\kappa},
\ea
which proves part (II).

\subsection{Proof of Theorem 2}
Let
\ba
\bfDelta_1 = \hat{\bfbeta} - \bfbeta^0,
\quad
\bfDelta_2 = \hat{\bfgamma} - \bfgamma^0,
\quad
\bfDelta = (\bfDelta_1^\top, \bfDelta_2^\top)^\top,
\ea
and let $c_0 = c_s + c_b < 1/2$. By optimality of $(\hat{\bfbeta}, \hat{\bfgamma})$,
\be \label{pf:thm1_basic}
\begin{split}
\cfD_N( \bftheta^0 + \bfDelta, \bftheta^0 ) \le & - \nabla \cfL_N(\bftheta^0)^\top \bfDelta + \lambda_1 \left( \|\bfbeta^0\|_1 - \|\bfbeta^0 + \bfDelta_1\|_1 \right) \\
&+ \lambda_2 \left( \|\bfgamma^0\|_1 - \|\bfgamma^0 + \bfDelta_2\|_1 \right).
\end{split}
\ee
By (C4), with probability tending to one,
\ba
-\left\{ \nabla \cfL_N(\bftheta^0) - \bfg_N^0 \right\}^\top \bfDelta \le c_s \left( \lambda_1 \|\bfDelta_1\|_1 + \lambda_2 \|\bfDelta_2\|_1 \right).
\ea

Let
$\eta_{ij}^0 = \bfW_{ij}^\top \bftheta^0$. By the mean-value theorem,
\ba
b'(\eta_{ij}^0) - b'(\eta_{ij}^0 + \delta_i) = -v_{ij} \delta_i,
\ea
where $v_{ij} = b''(\eta_{ij}^0 + t_{ij} \delta_i)$ for some
$t_{ij} \in (0, 1)$. Local boundedness of $b''$ therefore gives
\ba
\left| (\bfg_N^0)^\top \bfDelta \right| &\le& C \left\{ \frac{1}{N} \sum_{i=1}^m \sum_{j=1}^{n_i} \delta_i^2 \right\}^{1/2} \frac{\|\bfW \bfDelta\|_2}{\sqrt N} \\
&=& C \Delta_N \frac{\| \bfW \bfDelta \|_2}{\sqrt N}.
\ea
The local lower curvature bound implies
\ba
\cfD_N( \bftheta^0 + \bfDelta, \bftheta^0 ) \ge c \frac{\| \bfW \bfDelta \|_2^2}{N},
\ea
and hence Young's inequality gives
\be \label{pf:thm1:approx}
\left|
(\bfg_N^0)^\top \bfDelta \right| \le \frac{1}{4} \cfD_N( \bftheta^0 + \bfDelta, \bftheta^0 ) + C \Delta_N^2.
\ee

Combining (\ref{pf:thm1_basic})-(\ref{pf:thm1:approx}) with decomposability of the two $\ell_1$ penalties yields
\be \label{pf:thm1:master}
\begin{split}
&\quad \frac{3}{4} \cfD_N( \bftheta^0 + \bfDelta, \bftheta^0 ) + (1 - c_s) \left\{ \lambda_1 \|\bfDelta_{1,\cfM_1^c}\|_1 + \lambda_2 \|\bfDelta_{2,\cfM_2^c}\|_1 \right\} \\
&\le C \Delta_N^2 + (1+c_s) \left\{ \lambda_1 \|\bfDelta_{1,\cfM_1}\|_1 + \lambda_2 \|\bfDelta_{2,\cfM_2}\|_1 \right\}.
\end{split}
\ee

Consider first the case in which
\ba
(1 + c_s) \left\{ \lambda_1 \|\bfDelta_{1,\cfM_1}\|_1 + \lambda_2 \|\bfDelta_{2,\cfM_2}\|_1 \right\} \le 2 C \Delta_N^2.
\ea
Equation (\ref{pf:thm1:master}) immediately gives
\ba
\cfD_N( \hat{\bftheta}, \bftheta^0 ) = O_p(\Delta_N^2)
\ea
and
\ba
\lambda_1 \|\bfDelta_1\|_1 + \lambda_2\|\bfDelta_2\|_1 = O_p(\Delta_N^2).
\ea

Otherwise, the $\Delta_N^2$ term in (\ref{pf:thm1:master}) can be absorbed into the active-set penalty, so $\bfDelta \in \cfC_a(\cfM_1,\cfM_2)$ for a sufficiently large fixed $a$. By (C3) and the integral form of the Taylor expansion,
\ba
\cfD_N( \bftheta^0 + \bfDelta, \bftheta^0 ) \ge \frac{\kappa}{2} \|\bfDelta\|_2^2.
\ea
Moreover,
\ba
\lambda_1 \|\bfDelta_{1,\cfM_1}\|_1 + \lambda_2 \|\bfDelta_{2,\cfM_2}\|_1 \le \left( M_1 \lambda_1^2 + M_2 \lambda_2^2 \right)^{1/2} \|\bfDelta\|_2.
\ea
Substitution into (\ref{pf:thm1:master}) therefore gives
\ba
\cfD_N( \hat{\bftheta}, \bftheta^0 ) = O_p \left( M_1 \lambda_1^2 + M_2 \lambda_2^2 \right)
\ea
and
\ba
\lambda_1 \|\bfDelta_1\|_1 + \lambda_2 \|\bfDelta_2\|_1 = O_p \left( M_1 \lambda_1^2 + M_2 \lambda_2^2 \right).
\ea
Combining the two cases proves (I) and (II). The $\ell_1$ bound for $\bfbeta$ follows immediately by dividing (II) by $\lambda_1$.

\subsection{Proof of Proposition 1}
For the Gaussian identity-link model, $b(\eta) = \eta^2/2$ and $b''(\eta) = 1$. Consequently, 
\ba 
\cfD_N( \bftheta^0 + \bfDelta, \bftheta^0 ) = \frac{1}{2N} \| \bfW \bfDelta \|_2^2. 
\ea
Moreover, 
\ba 
\bfg_N^0 = -\frac{1}{N} \bfW^\top \bfD \bfdelta, 
\ea 
so the approximation term in the proof of Theorem 2 becomes directly 
\ba 
\left| (\bfg_N^0)^\top \bfDelta \right| \le \Delta_N \frac{\| \bfW \bfDelta \|_2}{\sqrt N}. 
\ea 
Thus Proposition 1 follows immediately from the argument of Theorem 2, with the excess loss equal to the quadratic prediction loss.

\subsection{Proof of Corollary 2}
Similar to the proof of Theorem 2
\ba
\bfDelta_1 = \hat{\bfbeta} - \bfbeta^0,
\quad
\bfDelta_2 = \hat{\bfgamma} - \bfgamma^0,
\quad
\bfDelta = (\bfDelta_1^\top, \bfDelta_2^\top)^\top,
\ea
and let $c_0 = c_s + c_b < 1/2$. By (C4)--C5), with probability tending to one,
\ba
\| \nabla_{\bfbeta} \cfL_N(\bftheta^0) \|_\infty \le c_0 \lambda_1, \quad \| \nabla_{\bfgamma} \cfL_N(\bftheta^0) \|_\infty \le c_0\lambda_2.
\ea
Hence
\ba
-\nabla \cfL_N(\bftheta^0)^\top \bfDelta \le c_0 \left( \lambda_1 \|\bfDelta_1\|_1 + \lambda_2 \|\bfDelta_2\|_1 \right).
\ea
Using decomposability of the $\ell_1$ penalty in
(\ref{pf:thm1_basic}) gives
\be \label{pf:cor2_cone}
\begin{split}
&\quad \cfD_N( \bftheta^0 + \bfDelta, \bftheta^0 ) + (1 - c_0) \left\{ \lambda_1 \| \bfDelta_{1,\cfM_1^c} \|_1 + \lambda_2 \| \bfDelta_{2,\cfM_2^c} \|_1 \right\} \\ 
&\le (1 + c_0) \left\{ \lambda_1 \| \bfDelta_{1,\cfM_1} \|_1 + \lambda_2 \| \bfDelta_{2,\cfM_2} \|_1 \right\}.
\end{split}
\ee
Since the excess loss is nonnegative, $\bfDelta \in \cfC_a(\cfM_1, \cfM_2)$ for any fixed $a \ge \frac{1 + c_0}{1 - c_0}$.

By the Cauchy-Schwarz inequality,
\be \label{pf:cor2_active}
&~& \lambda_1 \| \bfDelta_{1,\cfM_1} \|_1 + \lambda_2 \| \bfDelta_{2,\cfM_2} \|_1 \nonumber \\
&\le& \lambda_1 \sqrt{M_1} \| \bfDelta_{1,\cfM_1} \|_2 + \lambda_2 \sqrt{M_2} \| \bfDelta_{2,\cfM_2} \|_2 \nonumber \\
&\le& \left( M_1 \lambda_1^2 + M_2 \lambda_2^2 \right)^{1/2} \|\bfDelta\|_2.
\ee
For $0\le t\le 1$, the second-order Taylor expansion gives
\ba
\cfD_N( \bftheta^0 + \bfDelta, \bftheta^0 ) =
\int_0^1 (1-t) \bfDelta^\top \nabla^2 \cfL_N ( \bftheta^0 + t \bfDelta ) \bfDelta \, dt.
\ea
On the event that $\|\bfDelta\|_2\le r$, condition (C3) therefore yield 
\be \label{pf:cor2_curvature}
\cfD_N( \bftheta^0 + \bfDelta, \bftheta^0) \ge \frac{\kappa}{2} \|\bfDelta\|_2^2.
\ee
Combining (\ref{pf:cor2_cone})-(\ref{pf:cor2_curvature}) gives
\ba
\frac{\kappa}{2} \|\bfDelta\|_2^2 \le (1 + c_0) \left( M_1 \lambda_1^2 + M_2 \lambda_2^2 \right)^{1/2} \|\bfDelta\|_2,
\ea
and hence
\be \label{pf:cor2_l2}
\|\bfDelta\|_2 = O_p \left[ \left( M_1 \lambda_1^2 + M_2 \lambda_2^2 \right)^{1/2} \right].
\ee
Since $M_1 \lambda_1^2 + M_2 \lambda_2^2 = o(1)$, the bound in (\ref{pf:cor2_l2}) is $o_p(1)$, so the local restriction $\|\bfDelta\|_2 \le r$ holds with probability tending to one.

Substituting (\ref{pf:cor2_l2}) into (\ref{pf:cor2_cone}) and (\ref{pf:cor2_active}) yields
\ba
\cfD_N( \hat{\bftheta}, \bftheta^0 ) = O_p \left( M_1 \lambda_1^2 + M_2 \lambda_2^2 \right),
\ea
which proves part (I).

Equation (\ref{pf:cor2_cone}) further implies
\ba
&~& \lambda_1 \|\bfDelta_1\|_1 + \lambda_2 \|\bfDelta_2\|_1 \\
&\le& C \left\{ \lambda_1 \| \bfDelta_{1,\cfM_1} \|_1 + \lambda_2 \| \bfDelta_{2,\cfM_2} \|_1 \right\} \\
&=& O_p \left( M_1 \lambda_1^2 + M_2 \lambda_2^2 \right),
\ea
which proves part (II). The bound for $\bfbeta$ follows by dividing the last part by $\lambda_1$. 

\subsection{Proof of Proposition 2}
Similar to the previous proofs, let
\ba
\bfDelta = \hat{\bftheta} - \widetilde{\bftheta}, \quad \bfDelta_1 = \hat{\bftheta} - \widetilde{\bfbeta}, \quad \bfDelta_2 = \hat{\bfgamma} - \widetilde{\bfgamma}.
\ea
By the definition of the marginal-model parameter, $\E \left\{ \nabla \cfL( \widetilde{\bftheta} ) \right\} = \bfzero$. The centered score condition therefore implies, with probability tending to one,
\ba
\| \nabla_{\bfbeta} \cfL( \widetilde{\bftheta} ) \|_\infty \le c_s \lambda_1, \quad
\| \nabla_{\bfgamma} \cfL( \widetilde{\bftheta} ) \|_\infty \le c_s \lambda_2.
\ea
The basic inequality and decomposability give
\ba
&~& \cfD_N( \widetilde{\bftheta} + \bfDelta, \widetilde{\bftheta} ) + (1-c_s) \left\{ \lambda_1 \| \bfDelta_{1,\widetilde{\cfM}_1^c} \|_1 + \lambda_2 \| \bfDelta_{2,\widetilde{\cfM}_2^c} \|_1 \right\}
\\
&\le& (1+c_s) \left\{ \lambda_1 \| \bfDelta_{1,\widetilde{\cfM}_1} \|_1 + \lambda_2 \| \bfDelta_{2,\widetilde{\cfM}_2} \|_1 \right\}.
\ea
Hence, $\bfDelta$ belongs to the corresponding weighted cone. Moreover,
\ba
\lambda_1 \| \bfDelta_{1,\widetilde{\cfM}_1} \|_1 + \lambda_2 \| \bfDelta_{2,\widetilde{\cfM}_2} \|_1 \le \left( \widetilde M_1 \lambda_1^2 + \widetilde M_2 \lambda_2^2 \right)^{1/2} \|\bfDelta\|_2.
\ea
The weighted RE condition and the same localization argument as in the proof of Theorem 2 yield
\ba
\|\bfDelta\|_2 = O_p \left[ \left( \widetilde M_1 \lambda_1^2 + \widetilde M_2 \lambda_2^2 \right)^{1/2} \right].
\ea
Substituting the above bound into the basic completes the proof.

\subsection{Proof of Theorem 3}
By definition of the marginal working parameter, 
\ba 
\E \left\{ \nabla \cfL_N(\widetilde{\bftheta}) \right\} = \bfzero.
\ea 
Hence the assumed score concentration at $\widetilde{\bftheta}$ gives 
\ba 
- \nabla \cfL_N(\widetilde{\bftheta})^\top \widetilde{\bfDelta} \le c_s \left\{ \lambda_1 \|\widetilde{\bfDelta}_1 \|_1 + \lambda_2 \| \widetilde{\bfDelta}_2 \|_1 \right\} 
\ea 
with probability tending to one. The basic inequality is therefore the same as that used in the proof of Corollary 1, with $(\bftheta^0, \cfM_1, \cfM_2, M_1, M_2)$ replaced by $(\widetilde{\bftheta}, \widetilde{\cfM}_1, \widetilde{\cfM}_2, \widetilde M_1, \widetilde M_2)$ and $c_0$ replaced by $c_s$. The corresponding weighted cone condition, weighted RE condition, and Cauchy--Schwarz argument consequently yield 
\ba 
\cfD_N( \hat{\bftheta}, \widetilde{\bftheta} ) = O_p \left( \widetilde M_1 \lambda_1^2 + \widetilde M_2 \lambda_2^2 \right) 
\ea 
and 
\ba 
\lambda_1 \| \hat{\bfbeta} - \widetilde{\bfbeta} \|_1 + \lambda_2 \| \hat{\bfgamma} - \widetilde{\bfgamma} \|_1 = O_p \left( \widetilde M_1 \lambda_1^2 + \widetilde M_2 \lambda_2^2 \right). 
\ea

\subsection{Proof of Theorem 4}
The first-order Taylor expansion of $\nabla \cfL_N(\widetilde{\bftheta})$ at $\hat{\bftheta}$ gives
\be \label{pf6:scoreexp}
\nabla \cfL_N(\hat{\bftheta}) = \nabla \cfL_N(\widetilde{\bftheta}) + \bar{\bfH}_N \bfDelta,
\ee
where $\bar{\bfH}_N$ is defined in (I4). Therefore,
\be \label{pf6:decomp}
\hat \beta_l^{d} - \widetilde\beta_l = -\hat{\bfa}_l^\top \nabla \cfL_N(\widetilde{\bftheta}) + \{ \bfe_l^\top - \hat{\bfa}_l^\top \bar{\bfH}_N \} \bfDelta.
\ee
Based on the symmetry of $\widehat{\bfH}$, we have
\ba
\{ \bfe_l^\top - \hat{\bfa}_l^\top \bar{\bfH}_N \} \bfDelta = \bfr_l^\top \bfDelta + \hat{\bfa}_l^\top (\widehat{\bfH} - \bar{\bfH}_N) \bfDelta.
\ea
By the blockwise remainder conditions in (I4), this is $o_p(m^{-1/2})$. Adding and subtracting $\bfa_l^0$ in the first term of (\ref{pf6:decomp}) gives
\be \label{pf6:main}
\begin{split}
-\hat{\bfa}_l^\top \nabla \cfL_N(\widetilde{\bftheta}) = &-(\bfa_l^0)^\top \nabla \cfL_N(\widetilde{\bftheta}) - (\hat{\bfa}_l - \bfa_l^0)^\top \nabla \cfL_N(\widetilde{\bftheta}).
\end{split}
\ee
The second term in (\ref{pf6:main}) is $o_p(m^{-1/2})$ by (I4). By the definitions of $\phi_{l,ij}$ and $\Phi_{l,i}$,
\ba
-(\bfa_l^0)^\top \nabla \cfL_N(\widetilde{\bftheta}) = \frac{1}{N} \sum_{i=1}^m \sum_{j=1}^{n_i} \phi_{l,ij} = \frac{1}{m} \sum_{i=1}^m \Phi_{l,i}.
\ea
Consequently,
\ba
\sqrt m (\hat \beta_l^{d} - \widetilde \beta_l) = \frac{1}{\sqrt m} \sum_{i=1}^m \Phi_{l,i} + o_p(1).
\ea
Because $\nabla \cfL (\widetilde{\bftheta}) = \bfzero$, $\E(\Phi_{l,i}) = 0$. Conditions (I1) and (I5) therefore imply, by the Lyapunov central limit theorem \citep{billingsley2013convergence},
\ba
\frac{1}{\sqrt m} \sum_{i=1}^m \Phi_{l,i} \xrightarrow{d} N(0, V_l).
\ea

It remains to establish variance consistency. By the law of large numbers and (I1),
\ba
\frac{1}{m}\sum_{i=1}^m \Phi_{l,i}^2 \xrightarrow{P} V_l.
\ea
Condition (I5) and the Cauchy-Schwarz inequality imply
\ba
\frac{1}{m} \sum_{i=1}^m | \hat \Phi_{l,i}^2 - \Phi_{l,i}^2 | = o_p(1).
\ea
Moreover, $\bar \Phi_l=o_p(1)$. Hence
\ba
\hat V_l =
\frac{1}{m} \sum_{i=1}^m ( \hat \Phi_{l,i} - \bar\Phi_l )^2 \xrightarrow{P} V_l.
\ea
By Chebyshev's inequality, part (II) of the theorem follows from Slutsky's theorem.

\clearpage

\section{Algorithm and implementation} \label{sec:alg}
\subsection{Implementation of GSHEL}
The GSHEL estimator (8) can be re-expressed as
\be \label{GSHEL2}
\hat{\bftheta} = \underset{\bftheta \in \mathbb{R}^{p+p_0}}{\arg\min} \left\{ \frac{1}{N} \sum_{i,j} l(\bfW_{ij}^\top \bftheta; y_{ij}) + \lambda_1 \sum_{k=1}^{p+p_0} w_k \lvert \theta_k \rvert \right\},
\ee
where $w_k = 1$ for the $\bfbeta$-coordinates and $w_k = \lambda_2/\lambda_1$ for the $\bfgamma$-coordinates. Thus, GSHEL can be viewed as a weighted LASSO estimator. To reduce the computational burden, we set $\lambda_2 = \lambda_1 \sqrt{\log(p_0)/\log(p)}$, and select $\lambda_1$ by $K$-fold cross-validation. 

To construct the cluster-constant matrix $\bfB$, our default construction screens covariates for between-cluster heterogeneity and uses their empirical cluster means as candidate synthetic controls. For a continuous covariate $X_l$, we perform an ANOVA, and for a categorical covariate $X_l$, we fit a generalized random-intercepts model and test whether the between-cluster variance is zero. The screening threshold is treated as a tuning parameter. Alternative constructions include PCA or sparse PCA applied to cluster-level summaries. See details in Algorithm \ref{algo1}. 

\subsection{Iterative SHEL and GSHEL} \label{subsec:ISHEL}
The consistency of the GSHEL estimator relies on the sparsity condition, i.e.,  $M_1 \lambda_1^2 + M_2 \lambda_2^2 = o(1)$. This condition is often plausible when endogenous covariates are present, as is common in observational studies \citep{grilli2011role}. However, when $\bfB \indep \bfalpha$, the optimal solution $\bfgamma^0$ need not be sparse, because the synthetic approximation may be driven mainly by spurious correlations between the cluster means of the covariates and $\bfalpha$. More precisely, \cite{mulayoff2019minimal} showed that $\delta_m = O_p\left( \left( p_0 / m \right)^{-\frac{a}{1-a}} \right)$, where $a = M_2/m \in (0, 1)$ is the sparsity fraction. Since the primary inferential target is $\bfbeta$, the coefficient vector $\bfgamma$ can be viewed as a nuisance parameter. Motivated by this observation, we propose iterative versions of GSHEL, denoted IGSHEL (or ISHEL for linear models), in which the final fitting step in Algorithm 1 is repeated while conditioning on the estimated synthetic approximation from the previous iteration.

Let $\widehat{\bfD \bfalpha}^{(s)} = \bfB \hat{\bfgamma}^{(s)}$ denote the estimated synthetic approximation at iteration $s$. At each step, we augment the current design with $\widehat{\bfD \bfalpha}^{(s-1)}$ from the previous iteration and refit the penalized model. The algorithm stops when $\left\| \widehat{\bfD\bfalpha}^{(s)} - \widehat{\bfD\bfalpha}^{(s-1)} \right \|_2^2 < e_{\mathrm{thr}}$. This iterative scheme can yield a sparser final solution. However, because the design is updated in a data-dependent manner across iterations, the post-selection inference procedures developed in Section 4 do not directly apply to IGSHEL. See details in Algorithm \ref{algo2}. 

\begin{algorithm}
\caption{Synthetic heterogeneous-effects LASSO (SHEL) and generalized synthetic heterogeneous-effects LASSO (GSHEL).}
\label{algo1}
\begin{algorithmic}[1]
\State Set significance level $\alpha$
\State $j \gets 1$
\For{$l = 1, \cdots, p$}
    \If{$X_l$ is continuous}
        \State Perform an ANOVA
    \ElsIf{$X_l$ is categorical}
        \State Fit a generalized random-intercepts model
        \State Test whether the between-cluster variance is zero
    \EndIf
    \If{p-value $< \alpha$}
        \State Compute the cluster means of $X_l$, denoted by $\bar{\bfX}_l$
        \State $\bfB_{\cdot j} \gets \bar{\bfX}_l$; $j \gets j+1$
    \EndIf
\EndFor
\State $p_0 \gets j-1$
\State Fit (8) with $\lambda_1$ selected by $K$-fold cross-validation
\State Apply the debiased inference procedure 
\end{algorithmic}
\end{algorithm}

\begin{algorithm}
\caption{Iterative synthetic heterogeneous-effects LASSO (ISHEL) and generalized synthetic heterogeneous-effects LASSO (IGSHEL).}
\label{algo2}
\begin{algorithmic}[1]
\State Run Algorithm 1 without the inference step
\State Initialize $\widehat{\bfD\bfalpha}^{\,(0)} \gets \bfB \hat{\bfgamma}^{(0)}$
\State Set convergence threshold $e_{\mathrm{thr}}$; set $e \gets \infty$ and $s \gets 1$
\While{$e \ge e_{\mathrm{thr}}$}
    \State Augment the design matrix with $\widehat{\bfD\bfalpha}^{\,(s-1)}$
    \State Fit (18) or (19), with $\lambda_1$ selected by $K$-fold cross-validation
    \State Update $\widehat{\bfD\bfalpha}^{\,(s)} \gets \bfB \hat{\bfgamma}^{(s)}$
    \State $e \gets \left\|\widehat{\bfD\bfalpha}^{\,(s)} - \widehat{\bfD\bfalpha}^{\,(s-1)}\right\|_2^2$
    \State $s \gets s+1$
\EndWhile
\end{algorithmic}
\end{algorithm}

\clearpage
\newpage

\section{Additional simulation results}
\subsection{Definitions of performance measures}
\begin{table}[h!]
\caption{Measures of performance in simulation studies.}
    \centering
    \begin{tabular}{cp{0.7\linewidth}}
    \toprule
    Metrics & Definition \\
    \midrule
    False positives & the number of selected covariates that corresponded to the zero coefficients \\
    True positives & the number of selected covariates that correspond to the nonzero coefficients \\
    Root mean squared error & square root of the quadratic mean of the differences between $Y_{ij}$ and $\hat{Y}_{ij}$ \\
    Residue ICC & residue intracluster correlation, obtained by fitting a random-intercept model on the fitted residuals \\
    Type I error & the pooled proportion of null coordinates rejected at a nominal significance level of 0.05 \\
    Sensitivity & sensitivity of in-sample prediction \\
    Specificity & specificity of in-sample prediction \\
    Power & the pooled proportion of nonzero coefficients is tested significant at a significance level of 0.05 \\
    Median CI length & median length of the 95\% confidence intervals for variables selected by SHEL or GSHEL \\
    \bottomrule
    \end{tabular}
    \label{Measures}
\end{table}

\clearpage
\newpage

\subsection{Additional simulation results for variable selection, parameter estimation, and prediction}
\begin{figure}[h!]
    \centering
    \includegraphics[width=1\linewidth]{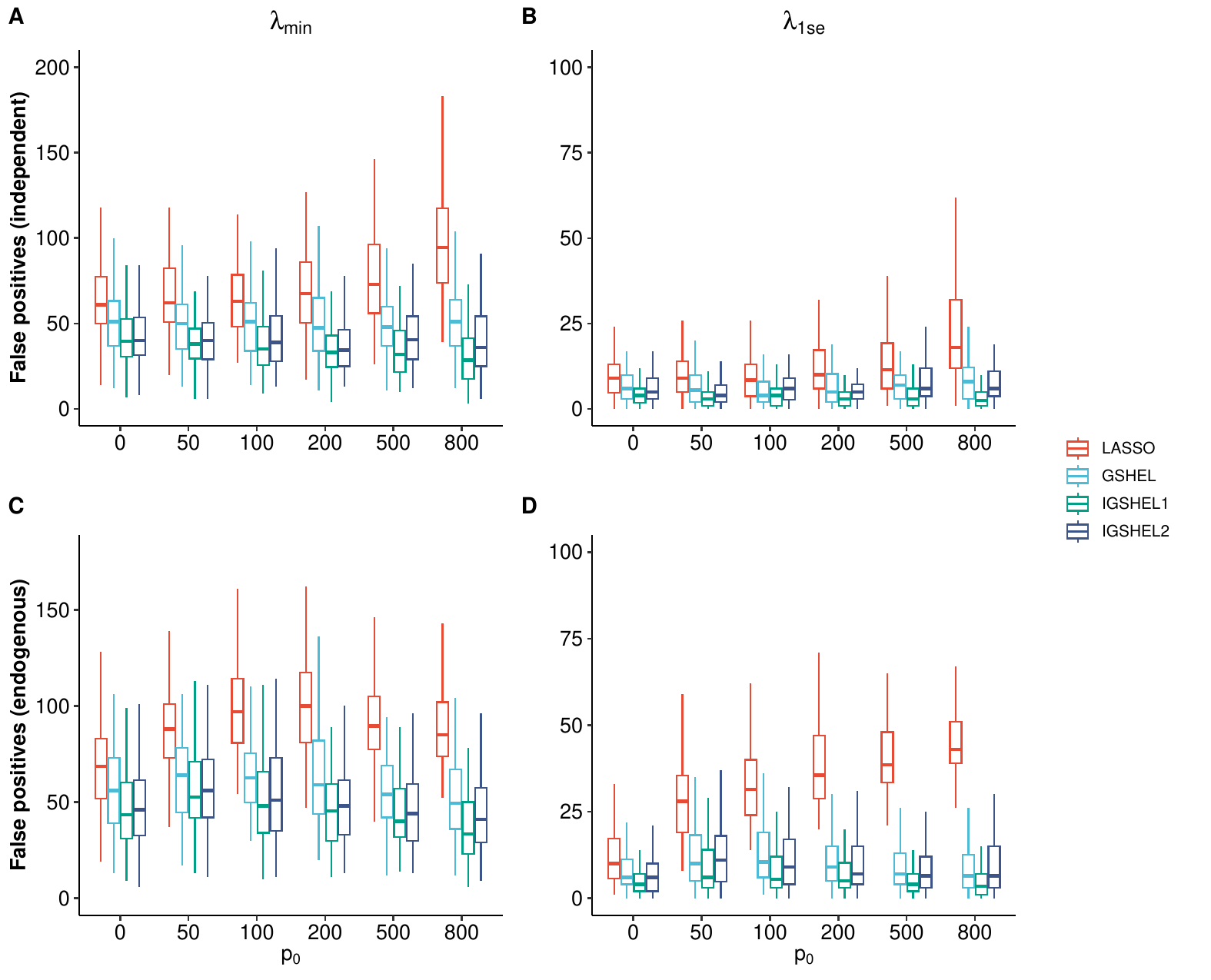}
    \caption{Number of false selections in high-dimensional logistic random-intercepts models with Gaussian random intercepts. }
    \label{Fig_FP2}
\end{figure}

\begin{figure}[h!]
    \centering
    \includegraphics[width=1\linewidth]{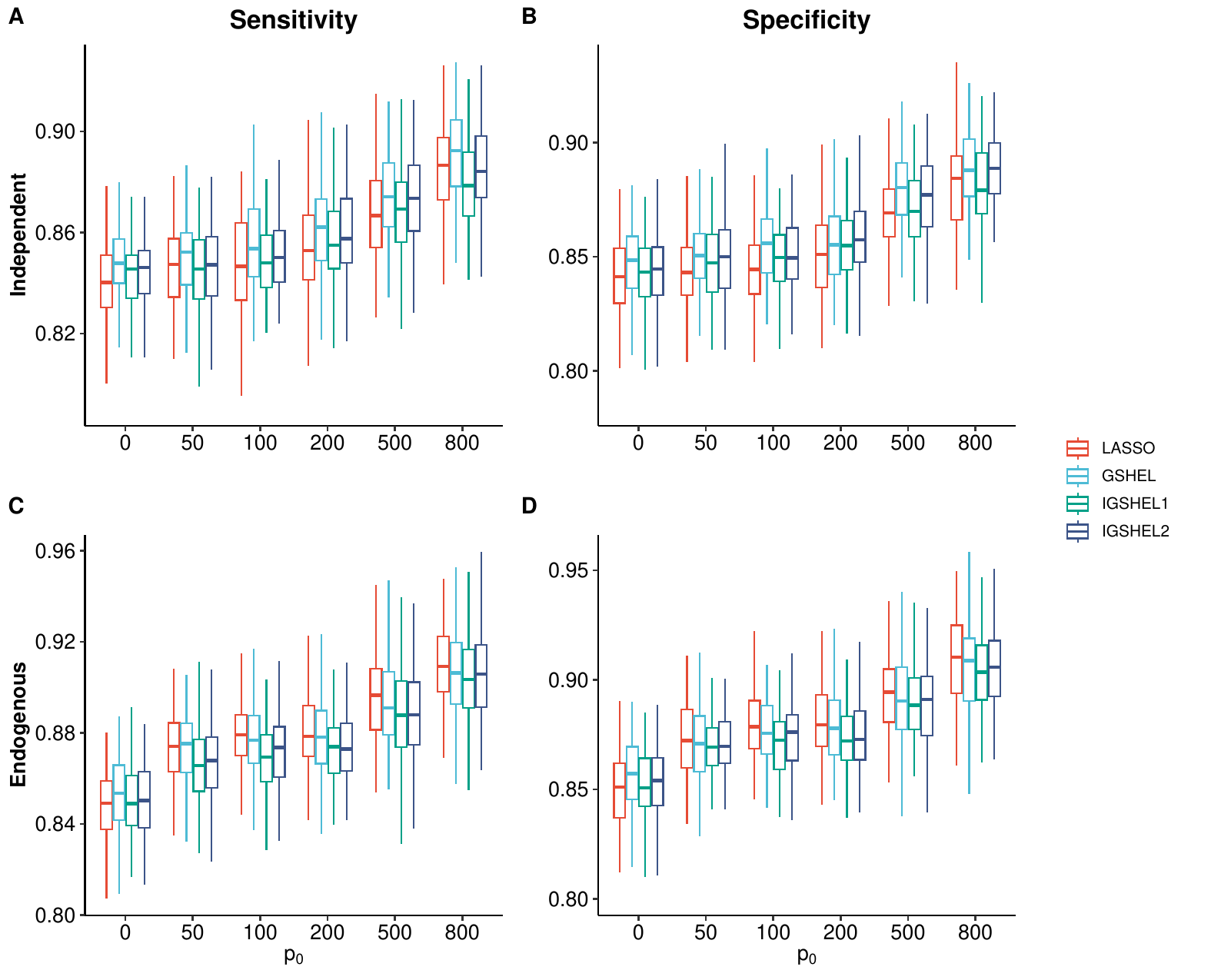}
    \caption{Sensitivity and specificity in high-dimensional logistic random-intercepts models with Gaussian random intercepts. }
    \label{Fig_e2}
\end{figure}

\begin{figure}[h!]
    \centering
    \includegraphics[width=1\linewidth]{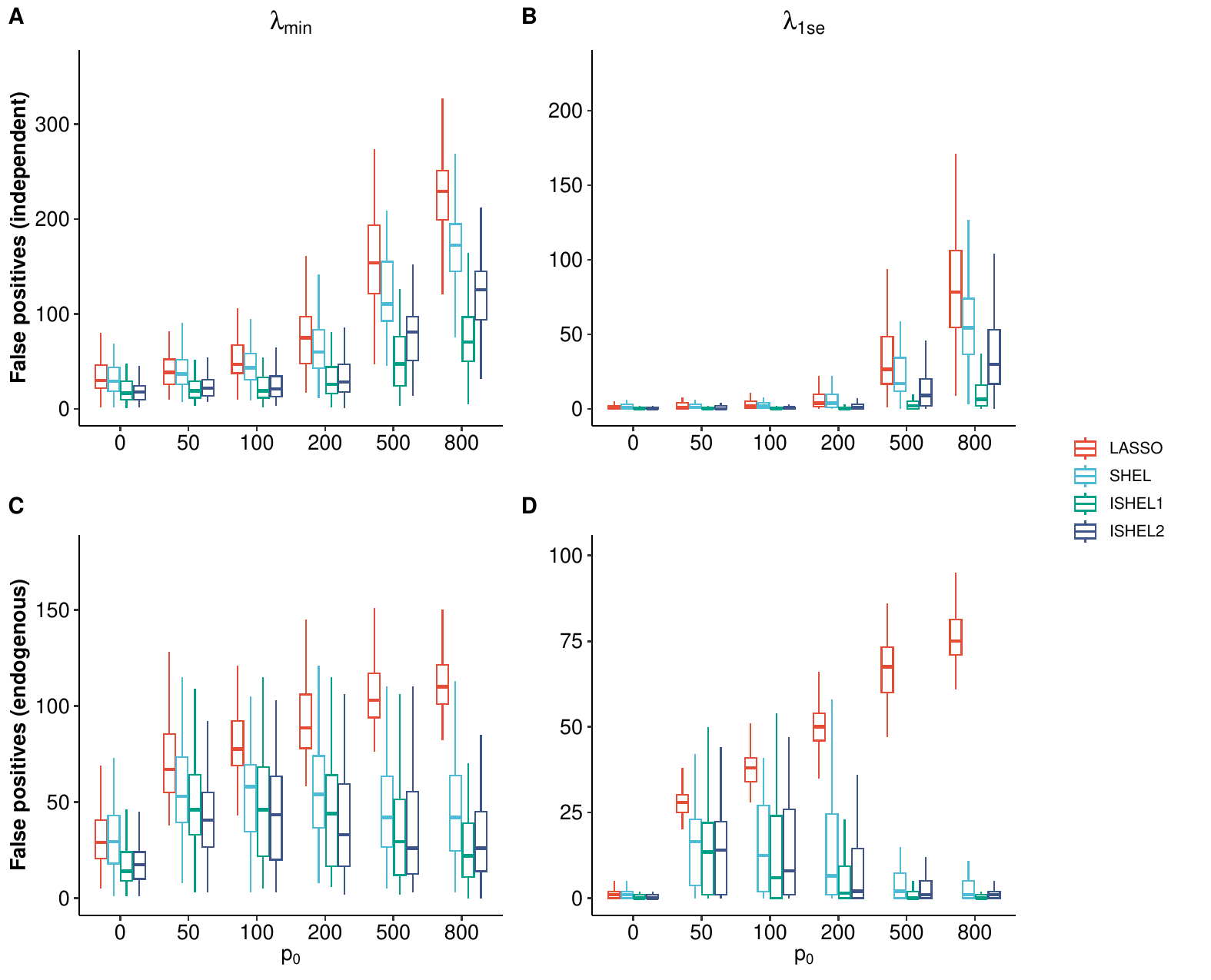}
    \caption{Number of false selections for high-dimensional linear random-effects models with Gaussian mixture random intercepts. }
    \label{Fig_FP3}
\end{figure}

\begin{figure}[h!]
    \centering
    \includegraphics[width=1\linewidth]{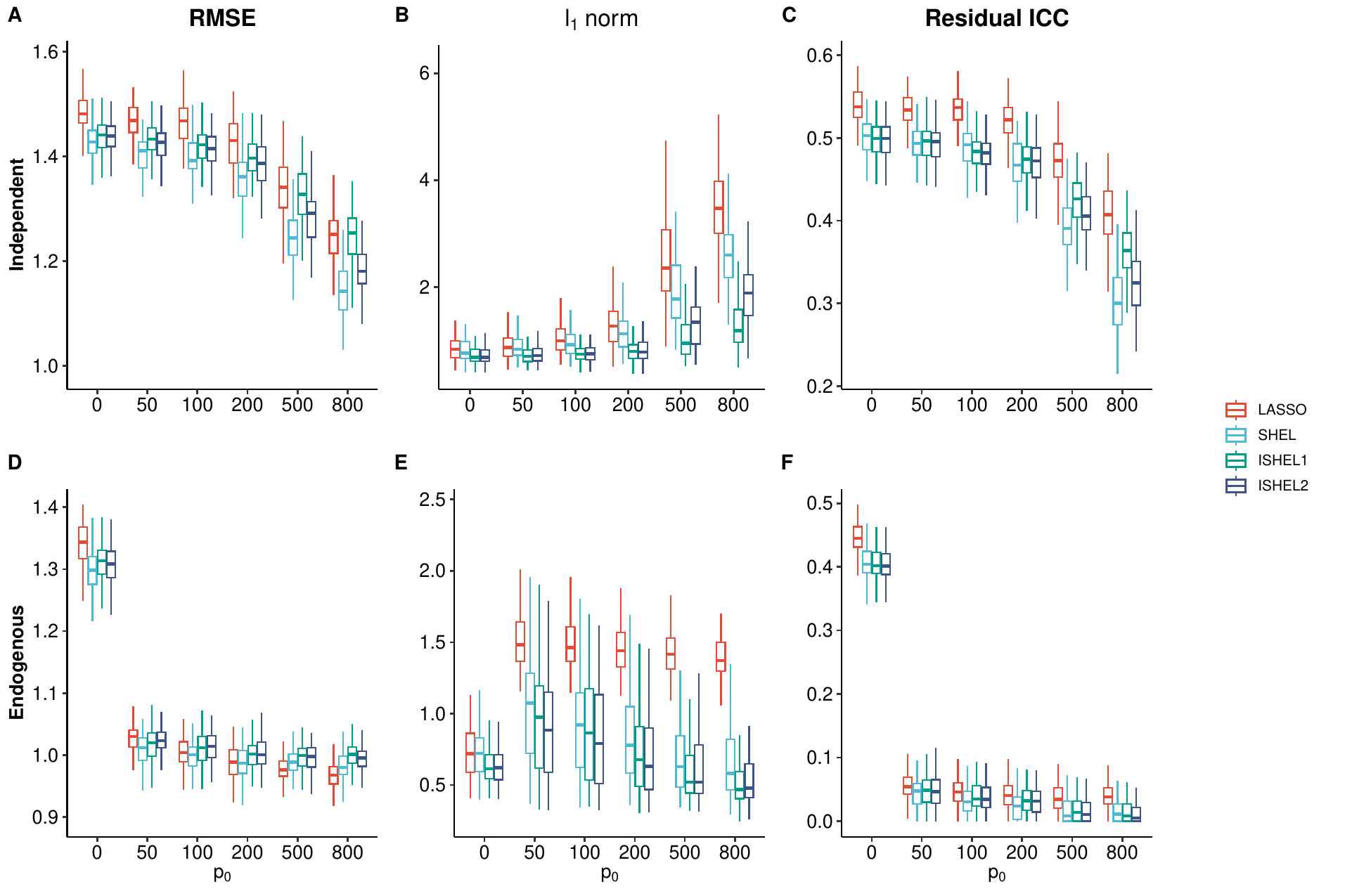}
    \caption{Empirical RMSE, $\lVert \hat{\bfbeta} - \bfbeta^0 \rVert_1$, and residual ICC in high-dimensional linear random-intercepts models with Gaussian mixture random intercepts. }
    \label{Fig_e3}
\end{figure}

\begin{figure}[h!]
    \centering
    \includegraphics[width=1\linewidth]{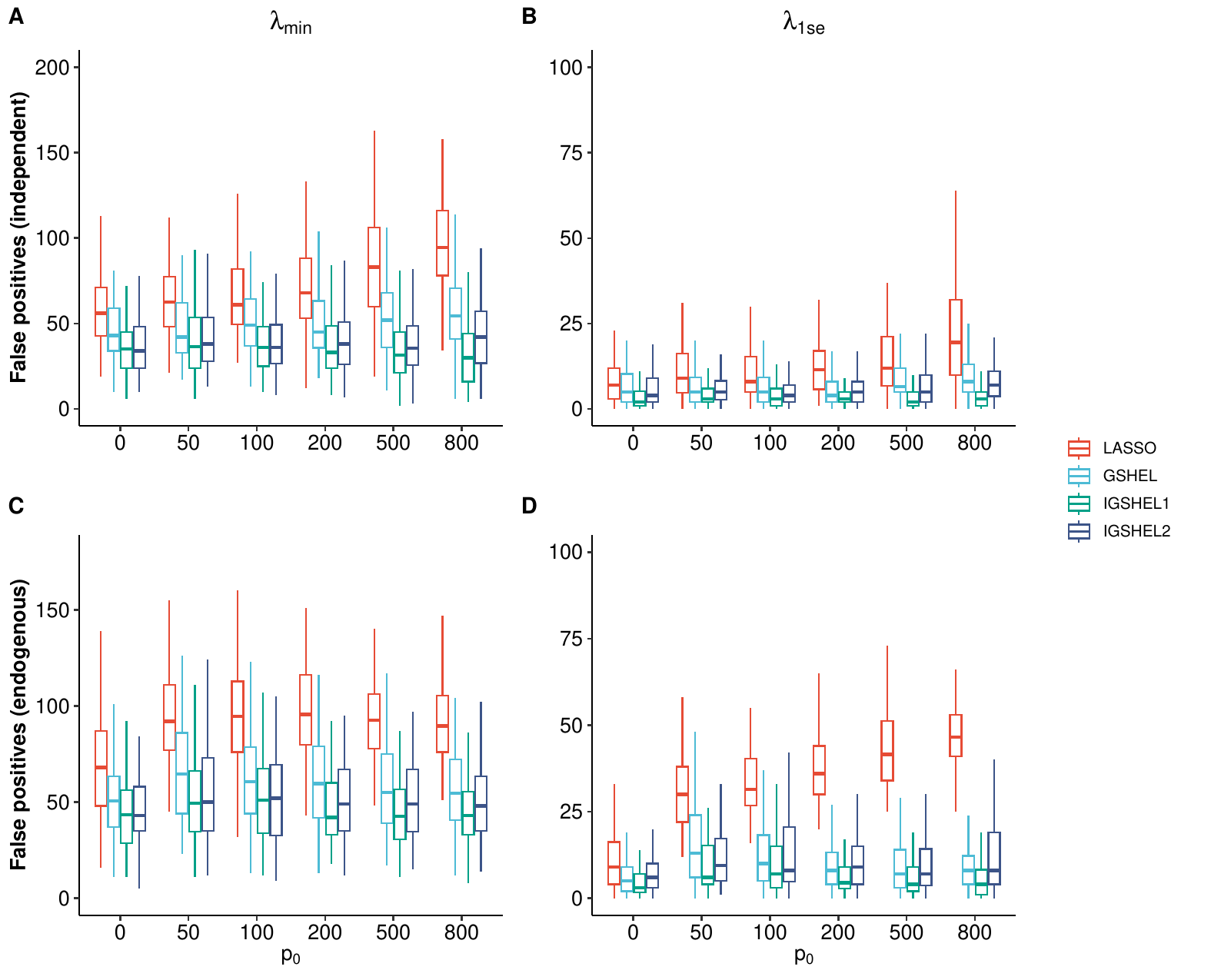}
    \caption{Number of false selections in high-dimensional logistic random-intercepts models with Gaussian mixture random intercepts. }
    \label{Fig_FP4}
\end{figure}

\begin{figure}[h!]
    \centering
    \includegraphics[width=1\linewidth]{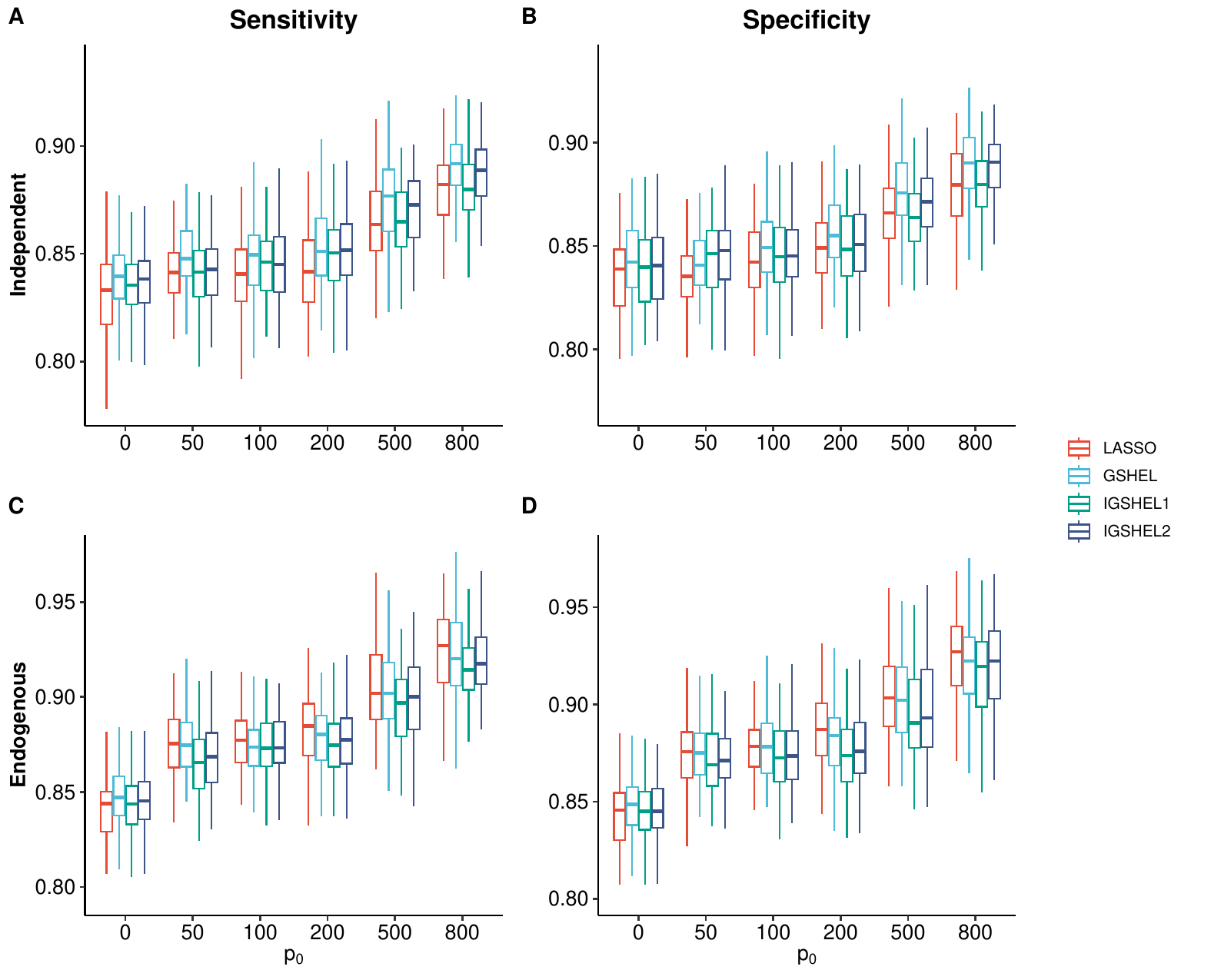}
    \caption{Sensitivity and specificity in high-dimensional logistic random-intercepts models with Gaussian mixture random intercepts. }
    \label{Fig_e4}
\end{figure}

\clearpage
\newpage

\subsection{Additional simulation results for post-selection inference}

\begin{table}[!h]
\centering
\caption{\label{tab:debias_mix} Debiased inference under Gaussian-mixture random effects. }
\centering
\resizebox{\ifdim\width>\linewidth\linewidth\else\width\fi}{!}{
\fontsize{8}{10}\selectfont
\begin{threeparttable}
\begin{tabular}[t]{llccccccc}
\toprule
\multicolumn{3}{c}{ } & \multicolumn{2}{c}{Pooled type I error} & \multicolumn{2}{c}{Power} & \multicolumn{2}{c}{Median CI length} \\
\cmidrule(l{3pt}r{3pt}){4-5} \cmidrule(l{3pt}r{3pt}){6-7} \cmidrule(l{3pt}r{3pt}){8-9}
Model & Setting & $p_0$ & Debiased & GLMM & Debiased & GLMM & Debiased & GLMM\\
\midrule
 &  & 0 & 0.055 & 0.055 & 1.000 & 1.000 & 0.170 & 0.122\\

 &  & 50 & 0.055 & 0.230 & 1.000 & 1.000 & 0.129 & 0.121\\

 &  & 100 & 0.056 & 0.375 & 1.000 & 1.000 & 0.127 & 0.121\\

 &  & 200 & 0.062 & 0.503 & 1.000 & 1.000 & 0.127 & 0.121\\

 &  & 500 & 0.064 & 0.632 & 1.000 & 1.000 & 0.130 & 0.115\\

 & \multirow{-6}{*}{\raggedright\arraybackslash Endogenous} & 800 & 0.054 & 0.577 & 1.000 & 1.000 & 0.130 & 0.112\\

 &  & 0 & 0.053 & 0.064 & 1.000 & 1.000 & 0.185 & 0.123\\

 &  & 50 & 0.068 & 0.034 & 1.000 & 1.000 & 0.186 & 0.123\\

 &  & 100 & 0.035 & 0.026 & 1.000 & 1.000 & 0.184 & 0.123\\

 &  & 200 & 0.048 & 0.048 & 1.000 & 1.000 & 0.179 & 0.122\\

 &  & 500 & 0.034 & 0.034 & 1.000 & 1.000 & 0.165 & 0.119\\

\multirow{-12}{*}[0.5\dimexpr\aboverulesep+\belowrulesep+\cmidrulewidth]{\raggedright\arraybackslash Linear} & \multirow{-6}{*}{\raggedright\arraybackslash Independent} & 800 & 0.048 & 0.048 & 1.000 & 1.000 & 0.156 & 0.116\\
\cmidrule{1-9}
 &  & 0 & 0.054 & 0.079 & 0.968 & 0.988 & 0.232 & 0.286\\

 &  & 50 & 0.060 & 0.150 & 0.968 & 0.988 & 0.235 & 0.280\\

 &  & 100 & 0.055 & 0.206 & 0.969 & 0.979 & 0.237 & 0.279\\

 &  & 200 & 0.054 & 0.263 & 0.959 & 0.980 & 0.242 & 0.280\\

 &  & 500 & 0.062 & 0.263 & 0.953 & 0.994 & 0.257 & 0.286\\

 & \multirow{-6}{*}{\raggedright\arraybackslash Endogenous} & 800 & 0.043 & 0.241 & 0.931 & 0.994 & 0.266 & 0.300\\

 &  & 0 & 0.059 & 0.059 & 0.956 & 0.976 & 0.232 & 0.295\\

 &  & 50 & 0.051 & 0.053 & 0.968 & 0.992 & 0.231 & 0.298\\

 &  & 100 & 0.048 & 0.058 & 0.958 & 0.985 & 0.229 & 0.296\\

 &  & 200 & 0.051 & 0.056 & 0.955 & 0.988 & 0.227 & 0.294\\

 &  & 500 & 0.048 & 0.056 & 0.960 & 0.986 & 0.224 & 0.289\\

\multirow{-12}{*}[0.5\dimexpr\aboverulesep+\belowrulesep+\cmidrulewidth]{\raggedright\arraybackslash Logistic} & \multirow{-6}{*}{\raggedright\arraybackslash Independent} & 800 & 0.049 & 0.045 & 0.953 & 0.988 & 0.227 & 0.272\\
\bottomrule
\end{tabular}
\begin{tablenotes}
\item \textit{Note: } 
\item Debiased denotes debiased GSHEL introduced in Section 4. GLMM denotes the unpenalized low-dimensional mixed-effects refit. Type I error and power are pooled over the selected null and active coordinates, respectively. CI length is summarized by its median over simulation replications.
\end{tablenotes}
\end{threeparttable}}
\end{table}

\clearpage
\newpage

\section{Results for real data analysis}
\begin{table}[h!]
\centering
\caption{Biological categories of the 37 genes selected 
by GSHEL.}
\label{tab:gshel_genes}
\small
\begin{tabular}{llp{5.5cm}}
\toprule
\textbf{Category} & \textbf{Selected Genes} 
& \textbf{Biological Relevance} \\
\midrule
G-MDSC markers (3) 
& \textit{S100A12}$^*$, \textit{CD177}, \textit{MCEMP1} 
& Core NMF5 markers; top features in original 
  LASSO model \\
Longitudinal signal (1) 
& \textit{SERPINB2}$^*$
& Diverging trajectory between severe and 
  non-severe patients \\
Neutrophil chemotaxis (1) 
& \textit{CXCL2} 
& Neutrophil migration pathway enriched 
  in severe disease \\
MHC class II (2) 
& \textit{HLA-DMA}, \textit{HLA-DMB} 
& Enriched in non-severe patients \\
Interferon response (4) 
& \textit{RARRES3}$^*$, \textit{IRF2BP2}, 
  \textit{UPP1}, \textit{CTSW} 
& Interferon pathways enriched in infection 
  and associated with severity \\
Metabolism / ROS (3) 
& \textit{NQO2}, \textit{GPAT4}, \textit{ALDH1A1} 
& Metabolic pathways associated with 
  severe disease \\
Myeloid differentiation (2) 
& \textit{HOTAIRM1}, \textit{PDE4D} 
& Known roles in neutrophil development 
  and activation \\
Humoral immunity (2) 
& \textit{IGHM}, \textit{IGLVI-70}$^*$
& Humoral immune responses associated 
  with severity \\
Other protein-coding (11) 
& \textit{CCR3}, \textit{TSPAN2}$^*$, 
  \textit{TDRD9}$^*$, 
& Immune signaling, regulation, \\
& \textit{USP11}, \textit{METTL7B}, \textit{COA4}, 
& and cell metabolism \\
& \textit{YBEY}, \textit{MLH3}, 
  \textit{RPGRIP1}$^*$, & \\
& \textit{NOV}, \textit{AC008763.3}$^*$ & \\
Non-coding / 
& \textit{LINC00877}, \textit{RPL21P44},  
& Functional interpretation limited \\
unannotated (8)
& \textit{AC105749.1}, \textit{AC037198.2}$^*$, & \\
& \textit{AC108134.3}, \textit{AC023906.5}, & \\
& \textit{AC009303.4}, 
  \textit{ENSG00000288064}$^*$ & \\
\bottomrule
\multicolumn{3}{l}{\footnotesize $^*$Individually 
significant ($p < 0.05$) by the debiased GSHEL.} \\
\multicolumn{3}{l}{\footnotesize Ensembl IDs are 
retained for genes without established gene symbols.} \\
\end{tabular}
\end{table}

\bibliographystyle{apalike}
\bibliography{ref}

\end{document}